\documentclass[12pt]{llncs}

\usepackage{latexsym}
\usepackage{algorithmic}
\usepackage{makeidx}
\usepackage{graphicx}


\title{Fr\'echet Distance Problems in Weighted Regions \thanks{This work was supported in part by NSF award CCF-0635013}}
\author{Yam Ki Cheung
\and Ovidiu Daescu}
\institute{Department of Computer Science, \\
University of Texas at Dallas, \\
Richardson , TX 75080, USA, \\
{\tt \{ykcheung,daescu\}@utdallas.edu}}

\begin{document}

\maketitle
\pagestyle{plain}
\begin{abstract}
We discuss two versions of the Fr\'echet distance problem in
weighted planar subdivisions.
In the first one, the distance between two points is the weighted length
of the line segment joining the points. In the second one, the distance
between two points is the length of the shortest path between the points.
In both cases, we give algorithms for finding a $(1+\epsilon)$-factor approximation of the Fr\'echet
distance between two polygonal curves.
We also consider the Fr\'echet distance between two polygonal
curves among polyhedral obstacles in $\mathcal{R}^3$
(1/ $\infty$ weighted region problem) and present a
$(1+\epsilon)$-factor
approximation algorithm.

\end{abstract}

\section{Introduction}
Measuring similarity between curves is a fundamental problem that appears in
various applications, including computer graphics and computer vision, pattern recognition,
robotics, and structural biology. One common choice for measuring the similarity between curves is the \emph{Fr\'{e}chet distance}, introduced by Fr\'echet in 1906~\cite{Fre1906}. The traditional (continuous) Fr\'echet distance $\delta_F$ for two parametric curves
$P$,$Q$: $[0,1]\rightarrow\mathcal{R}^d$ is defined as
$$\delta_F(P,Q)=\inf_{\alpha,\beta:[0,1]\rightarrow[0,1]}\sup_{r\in[0,1]} S(P(\alpha(r)),Q(\beta(r))),$$
where $\alpha$ and $\beta$ range over all continuous non-decreasing functions with $\alpha(0)=\beta(0)=0$
and $\alpha(1)=\beta(1)=1$, $S$ is a distance metric between points, and $d>0$ is the dimension of the problem.

The Fr\'echet distance is described intuitively by a man walking a dog on a leash.
The man follows a curve (path), and the dog follows the other.
Both can control their speed independently, but backtracking is not allowed.
The Fr\'echet distance between the curves is
the length of the shortest leash that is sufficient for the man and the dog to walk their paths
from start to end.

\begin{figure}[t]
    \begin{center}
    \leavevmode
    \includegraphics[height=2in]{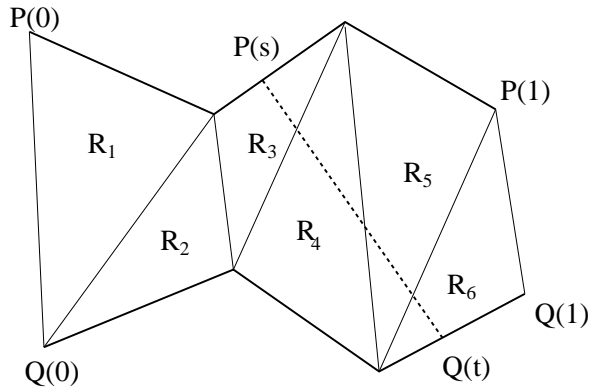}
    \caption{An example of the Fr\'echet distance problem in weighted regions.}
    \label{fig7}
    \end{center}
\end{figure}

A reparameterization is a continuous, non-decreasing surjective function.
The pair $\alpha, \beta$ of reparametrizations define how the end points of the leash, i.e.
$P(\alpha(r))$ and
$Q(\beta(r))$, sweep along their respective curves. We say that $\alpha, \beta$ is a \emph{matching} between
$P$ and $Q$ or $\alpha, \beta$ define a \emph{monotone walk} from leash $P(0)Q(0)$ to leash $P(1)Q(1)$. In this
We use $s$ and $t$ to denote $\alpha(r)$ and $\beta(r)$, respectively.
We assume $P$ and $Q$ are polygonal, and denote by $\delta(P)=(u_1,u_2,\ldots,u_p)$ and $\delta(Q)=(v_1,v_2,\ldots,v_q)$ the sequences of vertices of $P$ and $Q$, respectively.

In this paper, we study the \emph{Fr\'echet distance in weighted regions}. In this problem, a $2D$ plane is divided into a subdivision $R=\{R_1,R_2,\ldots,R_{n'} \}$ with a total of $n$ vertices, with each region
$R_i \in R$ having associated a positive integer weight $w_i$. The length of a path $\pi$ that stays within a region
$R_i$ of $R$ is defined as $w_i|\pi|$, where $|\pi|$ is the Euclidean length of $\pi$.
The length of a path in $R$ is the sum of the lengths of the subpaths within each region of $R$.

Let $P(s)Q(t)$ be the leash with endpoints $P(s)$ and $Q(t)$.
Let $R_i(P(s)Q(t))$ be the Euclidean length of
the line segment with endpoints $P(s)$ and $Q(t)$ within the region $R_i$. We define the distance between
two points $P(s)$ and
$Q(t)$ on $P$ and $Q$, respectively, as
(a) $S(P(s),Q(t))=\sum_{i=1}^{n'} w_i*R_i(P(s)Q(t))$ or
(b) $S(P(s),Q(t))$= the length of the shortest (weighted) path from $P(s)$ to $Q(t)$.

We formally define the Fr\'echet distance problem in weighted regions as the following: Given a $2D$ weighted subdivision $R$ and two
parameterized polygonal chains $P$ and $Q$ in $R$,
find the Fr\'echet distance between $P$ and $Q$, where the distance between two
points $P(s)$ and $Q(t)$ on $P$ and $Q$, respectively, is defined either as in (a) or as in (b) above.

Without loss of generality, we assume $R$ is triangulated and $P$ and $Q$ lie on boundaries of $R$ (see Fig.~\ref{fig7} as an illustration). We also discuss a special case in $\mathcal{R}^3$, where the
weights are either 1 or $\infty$ (i.e., free space and obstacles).

\subsection{Motivation} 
The motivation for studying such measurement comes from path/travel planning. Let us consider the following scenario. Suppose during a military operation, there are two teams of military units traveling on separate paths. However, before reaching their own destinations, the two teams want to maintain constant radio communications so that in case of any emergency one team can rescue the other team in time. The question is how should these two teams schedule their trips such that at any given time, the radio signal received by one team from the other is stronger than a threshold, or what is the optimal traveling schedule for each team such that minimum signal strength received is maximized.

We use the \emph{Beer-Lambert law} to model the decay (of the intensity) of an EM wave traveling through a region. Beer-Lambert law states that there is a logarithmic dependence between the \emph{transmission} $T$ of the EM wave and the product of the \emph{absorption coefficient} associated with the region $\alpha$ and the distance $l$ traveled by the EM wave in the region, i.e. $T=\frac{I}{I_o}=e^{-\alpha l}$, where $I_o$ and $I$ are the initial intensity (or power) of the EM wave and the intensity after the path, respectively \cite{Beer}. If the EM wave travels through a set of regions $R=\{R_1,R_2,\ldots,R_k\}$, the transmission $T$ can be written as $T=e^{\sum_{i=1}^{k} \alpha_i l_i }$, where $\alpha_i$ and $l_i$ are the absorption coefficient associated with $R_i$ and the distance traveled by the EM wave in $R_i$, respectively. The decay of the wave intensity can also be expressed in terms of the \emph{absorbance} $A$ which is defined as $A = -\log_{10} \frac{I}{I_0} = \sum_{i=1}^{k} \alpha_i l_i $. Notice that if we treat $R$ as a weighted subdivision, where the weight of each region $R_i$ is its absorption coefficient $\alpha_i$, $A$ is exactly the weighted length of the path traveled by the EM wave. Hence, this scheduling problem can be reduced to the Fr\'echet distance problem in weighted regions.

\subsection{Previous Work}
The Fr\'echet distance and its variants attracted considerable attention
in literature. Most previous work assumes an unweighted environment and can be divided into two categories, depending on the distance metric used. In the first category, the distance between two points is the Euclidean distance. In other words, the leash is always a line segment, as in case (a) above. Bending of the leash is not allowed. Fr\'echet distances in this category are also referred to as non-geodesic Fr\'echet distances.

Although Fr\'echet distance was introduced in 1906, it was not until 1993 that 
Alt and Godau~\cite{Alt93} gave the first polynomial-time algorithm that computes the exact non-geodesic Fr\'echet distance between two polygonal curves $P$ and $Q$. They first gave an $O(pq)$ time algorithm for solving the decision version of the problem, which is the problem to decide, for a given constant $\Delta$, whether the Fr\'echet distance between two curves is at most $\Delta$. Then, they showed how to solved optimization version of the problem, which is the problem to find the exact Fr\'echet distance, in $O(pq \log pq)$ time by applying the decision algorithm and parametric searching technique. They also presented a more practical alternative solution, which takes $O((p^2q+q^2p)\log(pq))$ time, by running binary search over a set of critical values.


Shortly after, Either and Mannila~\cite{Eiter94} introduced the concept of 
the \emph{discrete Fr\'echet distance} $\delta_{dF}$, which considers only order preserving pairing of vertices in the polygonal curves. They gave an $O(pq)$ time algorithm for computing the discrete Fr\'echet distance. They showed that the difference between the discrete Fr\'echet distance and exact Fr\'echet distance between two curves is at most the length of the longest edge of the curves. As a result, discrete Fr\'echet distance can be can be used for approximation of the exact Fr\'echet distance between two arbitrary curves. The approximation error decreases as more vertices (Steiner pionts) are introduced to the two curves.

Rote~\cite{Rote05} explored the Fr\'echet distance between more general curves. He gave an $O(pq\log pq)$ time algorithms for finding the Fr\'echet distance between piecewise smooth curves. The decision version of the problem can be answered in $O(pq)$ time. Wang et al.~\cite{Wang09} studied the problem of measuring partial similarity between curves. They presented the first exact polynomial-time algorithm to compute a partial matching between two polygonal curves, which maximizes the total length of subcurves of them that are similar to each other, where similarity is measured in Fr\'echet distance under $L_1$ or $L_\infty$ norm. Their algorithm takes $O(pq(p+q)^2\log(pq))$ time. Buchin et al.~\cite{Buc06} studied the Fr\'eechet distance between simple polygons and showed that it can be computed in polynomial time. They gave the first polynomial-time algorithm for computing
the Fr\'echet distance between simple polygons.

In the second category, the distance between two points is the geodesic distance.
Fr\'echet distances in this category are also referred to as geodesic Fr\'echet distances.
The leash is allowed to bend to achieve to the minimum length.
Maheshwari and Yi~\cite{Mah05} study the case in which the curves lie on a convex polyhedron. By constructing a visibility diagram
that encodes shortest path information for any pair of points on curves, one from each curve, they gave a $O((p^2q+pq^2)k^4\log(pqk))$ time algorithm, where $k$ is the number of faces of the polyhedron.
Cook and Wenk~\cite{cook} gave a $O(k+M^2\log(kM)\log M)$ expected time or $O(k+M^3\log(kM))$ worst-case time algorithm for computing the Fr\'echet distance between two polygonal curves when the leash is
constrained inside a polygon, where $k$ complexity of the polygon and $M$ is the larger of the complexities of the two curves. Chambers et al.~\cite{Cham08} showed how to compute the \emph{homotopic} Fr\'echet distance between two polygonal curves in the presence of obstacles in polynomial time. The homotopic Fr\'echet distance is the Fr\'echet distance with an additional continuity requirement: the leash is not allowed to switch discontinuously, e.g. jump over obstacles is forbidden unless the leash is long enough.

Weighted region shortest path problems have been investigated in computational geometry for about two decades.
Using Snell's refraction low and the continuous Dijkstra algorithm, Mitchell and Papadimitriou~\cite{Mit91}
present an $O(n^8\log\frac{nN'\rho}{\epsilon})$ time algorithm, where $N'$ is the largest integer coordinate of
vertices and $\rho$ is the ratio of the maximum weight to the minimum weight.
Aleksandrov et al.~\cite{Ale00,Ale05} provide two logarithmic discretization schemes
that place Steiner points along edges or bisectors of angles of regions in $R$,
forming a geometric progression. The placement of the Steiner points depends on an input
parameter $\epsilon >0$ and the geometry of the subdivision.
The $(1+\epsilon)$-approximation algorithms in~\cite{Ale00,Ale05} take
$O(\frac{n}{\epsilon}(\frac{1}{\sqrt{\epsilon}}+\log n)\log\frac{1}{\epsilon})$ and
$O(\frac{n}{\sqrt{\epsilon}}\log\frac{n}{\epsilon}\log\frac{1}{\epsilon})$ time, respectively.
Sun and Reif~\cite{Sun06} give an algorithm, called BUSHWHACK, which constructs a discrete graph $G$ by placing
Steiner
points along edges of the subdivision. By exploiting the geometric property of an optimal path, BUSHWHACK
computes an approximate path more efficiently as it accesses only a subgraph of $G$.
Combined with the logarithmic discretization scheme introduced in~\cite{Ale00}, BUSHWHACK takes
$O(\frac{n}{\epsilon}(\log\frac{1}{\epsilon}+\log n)\log\frac{1}{\epsilon})$ time.
Very recently, Aleksandrov et al.~\cite{Ale08} gave a query algorithm that can find an $\epsilon$-approximate
shortest path between any two points in $O(\bar{q})$ time, where $\bar{q}$ is a query time parameter. The
preprocessing time of this algorithm is
$O(\frac{(g+1)n^2}{\epsilon^{2/3}\bar{q}}\log \frac{n}{\epsilon}\log^4\frac{1}{\epsilon})$, where $g$ is the genus of the discrete graph constructed by the discretization scheme.
Cheng et al.~\cite{Cheng07} give an algorithm to approximate optimal paths in anisotropic regions,
which is a generalized case of weighted regions. Their algorithm takes $O(\frac{\rho^2\log \rho}{\epsilon^2}n^3\log(\frac{\rho n}{\rho}))$ time, where $\rho\ge 1$ is a constant such that the convex distance function of any region contains a concentric Euclidean disk with radius $1/\rho$. In weighted regions, the time complexity of the algorithm is improved to  $O(\frac{\rho\log\rho}{\epsilon}n^3\log(\frac{\rho n}{\epsilon}))$ time, where $\rho$ is the ratio of the maximum weight to the minimum weight. Very recently, Cheng et al.~\cite{Cheng07_2} also provided a query version of this algorithm that gives an approximate optimal path from a fixed source (in an anisotropic subdivision) in $O(\log \frac{\rho n}{\epsilon})$ time. The preprocessing time is $O(\frac{\rho^2n^4}{\epsilon^2}(\log \frac{\rho n}{\epsilon})^2)$.

\subsection{Definitions and Preliminaries}

We denote by $P(s)Q(t)$ the leash (line segment, also called link, or shortest path, depending of
context) from $P(s)$ to $Q(t)$, where $s,t\in [0,1]$.

Alt and Godau~\cite{Alt93} introduced the \emph{free space diagram} to solve the decision version of the
non-geodesic Fr\'echet distance problem (unweighted case): Given two polygonal curves $P$ and $Q$,
and a positive constant $\Delta$, determine if $\delta_F(P,Q)< \Delta$. A free space diagram is a
$[0,1]\times [0,1]$ parameter space such that each point $(s,t)$ in the parameter space corresponds to a leash
with end points $P(s)$ and $Q(t)$. A \emph{free space cell} $C\subseteq [0,1]^2$ is defined by two line
segments, one from each curve. $C$ corresponds to the leashes with endpoints on the two segments.
The \emph{free space} in the parameter space is defined as $\{(s,t): S(P(s),Q(t))<\Delta\}$.
That is, the free
space represents all links shorter than $\Delta$. A \emph{monotone path} is a path monotone along both
coordinate axes. There is a one-to-one correspondence between all possible
matchings of $P$ and $Q$ and all monotone paths from $(0,0)$ to $(1,1)$ in the free space diagram. Hence, $\delta_F(P,Q)< \Delta$ if and only if there exists a monotone path in the free space from the bottom left corner to the top right corner of the parameter space.

The \emph{discrete Fr\'echet distance}, introduced by Either and Mannila~\cite{Eiter94}, considers only the vertices of the (polygonal) curves. A coupling $L$ between $\delta(P)$ and $\delta(Q)$ is defined as
$$(u_{a_1},v_{b_1}),(u_{a_2},v_{b_2}),\ldots,(u_{a_m},v_{b_m}),$$
such that $a_1=1$, $b_1=1$, $a_m=p$, $b_m=q$, and for $i=1,2,\ldots,m$, we have $a_{i+1}=a_i$ or $a_{i+1}=a_i+1$, and $b_{i+1}=b_i$ or $b_{i+1}=b_i+1$. In other words, $L$ is an order preserving pairing of vertices in $P$ and $Q$. Note that each vertex can appear in $L$ more than once. The discrete Fr\'{e}chet distance is defined as $$\delta_{dF}(P,Q)=\inf_L \max_{i=1,2,\ldots,m} S(u_{a_i},v_{b_i}).$$

\subsection{Our results} Let $\epsilon >0$ be a positive constant given as part of the input. We also
assume $\epsilon < 1$ at times.
We have
the following results: (1) For weighted regions in the plane, with
$S(P(s),Q(t))=\sum_{i=1}^{n'} w_i*R_i(P(s)Q(t))$ (case (a) above), we present a $(1+\epsilon)$-approximation
algorithm that takes $O(pqN^4\log(pqN))$ time, $N=O(C(R)(\frac{n}{\epsilon}(\log\frac{1}{\epsilon}+\log n)\log\frac{1}{\epsilon})$ is the total number of Steiner points used and $C(R)$ is a constant associated with the geometry of the subdivision $R$; (2) For weighted regions in the plane, with $S(P(s),Q(t))$=the length of the shortest
(weighted) path from $P(s)$ to $Q(t)$, we present a $(1+\epsilon)$-approximation algorithm that takes
$O(C(R)^2\frac{pq}{\epsilon^2} (\log^4 \frac{1}{\epsilon})\bar{q} + \frac{(g+1)n^2}{\epsilon^{2/3}\bar{q}}\log
\frac{n}{\epsilon}\log^4\frac{1}{\epsilon})$ time, where $\bar{q}$ is a query time
parameter related to computing shortest path and $g$ is the genus of the graph constructed by the discretization scheme.
(3) We give a $(1+\epsilon)$-approximation algorithm for finding the Fr\'echet distance between $P$ and $Q$
among polyhedral obstacles in $\mathcal{R}^3$ (1 and $\infty$ weights). To the best of our knowledge this is the first
result for the $\mathcal{R}^3$ problem. The algorithm takes
$O(C(R)^2 pq (1/\epsilon^2) \log^4 (1/\epsilon)(n^2\lambda(n)\log(n/\epsilon)/{\epsilon}^4+n^2\log (n\gamma) \log(n\log\gamma)))$
time, where $\gamma$ is the ratio of the length of the longest obstacle edge to the
Euclidean distance between the two points, and $\lambda(n)$ is a very slowly-growing function related to
the inverse of the Ackermann's function.

\section{The Line Segment Leash}

In this section, we discuss the case of the line segment leash. That is, $S(P(s),Q(t))=\sum_{i=1}^{n'}w_i*R_i(P(s)Q(t))$. We first briefly discuss a pseudo polynomial exact algorithm for solving the decision version of this problem. Then, we address the optimization version of this problem directly and give a polynomial time approximation algorithm.

\subsection{An Exact Algorithm for the Decision Problem}

In this section, we give an exact algorithm extending Alt and Godau's algorithm~\cite{Alt93} to solve the decision version of the problem. Recall that the free space diagram introduced by Alt and Godau is a $[0,1]\times [0,1]$ parameter space such that each point $(s,t)$ in the parameter space corresponds to a leash with end points $P(s)$ and $Q(t)$. The free space diagram can be decomposed into $O(pq)$ free space cells such that each cell $C$ corresponds to the leashes with endpoints on two segments, one from each curve, where $p$ and $q$ are the number of vertices of $P$ and $Q$, respectively. To determine if $\delta_F(P,Q)< \Delta$ for some positive constant $\Delta$, we proceed as follows: \\

\noindent 1. Partition each cell $C$ into regions such that each region corresponds to leashes intersecting the same sequence of edges in $R$.\\
2. For each region, find the free space with respect to the positive constant $\Delta$ by solving an $O(n)$ order bivariate polynomial system with $O(n)$ equations and inequalities. \\
3. Determine if there exists a monotone path in the free space from $(0,0)$ to $(1,1)$. \\

\begin{figure}[t]
    \begin{center}
    \leavevmode
    \includegraphics[height=2in]{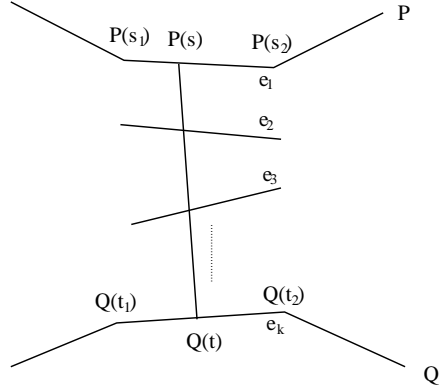}
    \caption{A leash intersects with a sequence of edges.}
    \label{seq}
    \end{center}
    \vspace*{-0.25in}
\end{figure}

The partition in step $1$ is an arrangement of $O(n)$ curves of the form $st+c_1s+c_2t+c_3=0$, where $c_1,c_2$, and $c_3$ are constants. In the worst case, the total number of regions in the parameter space is $(pqn^2)$. Each region can be defined mathematically by $O(n)$ inequalities. Once the free space in each region is computed, step $3$ can be solved by decomposing the parameter space and constructing a directed shortest path graph. We will address steps $1$ and $3$ in detail in later sections. Both steps can be solved in polynomial time. 
Next, we will show that Step $2$ can be solved, which dominates the complexity of this algorithm.

For each region, let $f(s,t)$ be the weighted length of the leash $P(s)Q(t)$ in terms of $s$ and $t$, let and $\{e_1,e_2,\ldots, e_k\}$ be the sequence of edges intersected by the leash. Note that $k=O(n)$ and $e_1$ and $e_k$ are the two segments in $P$ and $Q$, respectively.
Following~\cite{Chen01}, the weighted length of leashes within the same region have the same functional expression, i.e.
$S(P(s),Q(t))=\sqrt{1+m^2}\sum^{k-1}_{i=1}w_i|\frac{p_{i+1}-p}{m-m_{i+1}}-\frac{p_i-p}{m-m_i}| = \sqrt{1+m^2}\sum^{k}_{i=1} \frac{a_ip+b_i}{m-m_i}$, where $a_i,b_i$ are constants, $w_i$ is the weight of the
region bounded by edges $e_i$ and $e_{i+1}$, $m$, $p$ are the slope and intercept of the leash, respectively, and $m_i$, $p_i$ are the slope and intercept of $e_i\in\{e_1,e_2,\ldots, e_k\}$, respectively. Let the end points of $e_1$ be $P(s_1) = (x_1,y_1)$ and $P(s_2)=(x_2,y_2)$, respectively, and the end points of $e_k$ be $Q(t_1) = (x_3,y_3)$ and $Q(t_2) = (x_4,y_4)$, respectively. For any $s \in[s_1, s_2]$, we have\\
$$P(s) = (x_1+ (x_2-x_1)(s-s_1)/(s_2-s_1), y_1+ (y_2-y_1)(s-s_1)/(s_2-s_1)),$$
and similarly for any $t \in [t_1,t_2]$, we have\\
$$Q(t) = (x_3+ (x_4-x_3)(t-t_1)/(t_2-t_1), y_3+ (y_4-y_3)(t-t_1)/(t_2-t_1)).$$
Expressing $m$ and $p$ in terms of $s$ and $t$ , we obtain that\\
$$m =\frac{c_1s+c_2t+c_3}{d_1s+d_2t+d_3}$$ and $$p=\frac{c_1'st+c_2's+c_3't+c_4'}{d_1s+d_2t+d_3},$$\\
where $c_1,c_2,c_3,c_1',c_2',c_3',c_4',d_1,d_2$, and $d_3$ are constants. It follows that \\
$f(s,t) = \sqrt{1+(\frac{c_1s+c_2t+c_3}{d_1s+d_2t+d_3})^2}\sum^{k}_{i=1}\frac{a_i(c_1'st+c_2's+c_3't+c_4')/(d_1s+d_2t+d_3)+b_i}{(c_1s+c_2t+c_3)/(d_1s+d_2t+d_3)-m_i}.$\\


We can find the boundary of free space in this region by solving the following equation system:
$f(s,t) = \Delta$ subjects to $O(n)$ inequalities that define the region.

Since there are $O(n)$ fractional terms in $f(s,t)$, we can convert this equation system into a bivariate polynomial system of degree $O(n)$ which consists of one polynomial equation and $O(n)$ inequalities and requires pseudo polynomial time to solve~\cite{Collins75}.

We use Collins' cylindrical algebraic decomposition technique~\cite{Collins75} analyzing the polynomial system, which takes $O((\bar{m}\bar{n})^{k^r}d^k)$ time, where $\bar{m}$ is the number of equation and inequalities in the polynomial system, $\bar{n}$ is the maximum degree of polynomials, $r$ is the number of variables, $k$ is a constant and $d$ is the maximum bit length of coefficients. Hence, step $2$ takes $O(pq n^2 n^{O(1)}d^{O(1)})$ time.

This exact algorithm is not the one which ensures practical application.
Next, we are going propose an approximation algorithm for the optimization version of this problem. To approximate the Fr\'echet distance, we discretize continuous space by placing Steiner points on edges of $R$. We place the Steiner points in such a way that all links can be grouped into sets. In each set, the difference between weighted lengths of any two links is small. Instead of finding the optimal matching, we find the sequence of sets that the leash traversed in a specific matching. We then pick an arbitrary link from each set in the sequence, the maximum link length gives an approximation of the Fr\'echet distance.

\subsection{Discretization Using Steiner Points}

We first discretize the continuous space by placing Steiner points in $R$ extending the discretization scheme given in ~\cite{Ale00}.
Let $E$ be the set of all edges in $R$. Let $V$ be the set of vertices in $R$. For any point $v$
on an edge in $E$, let $E(v)$ be the set of edges incident to $v$ and let $d(v)$ be the minimum distance
between $v$ and edges in $E \setminus E(v)$.
For each edge $e\in E$, let $d(e)=\sup\{d(v)|v\in e\}$ and let $v_e$ be the point on $e$ so that $d(v_e)=d(e)$.
For each $v\in V$, the vertex radius for $v$ is defined as $r(v)=\frac{\epsilon B}{nw_{max}(v)}$,
where $\epsilon$ is a positive real number defining the quality of the approximation, $B$ is a lower
bound on $\delta_F(P,Q)$, and $w_{max}(v)$ is the maximum weight among all weighted regions incident
to $v$. The disk of radius $r(v)$ centered at $v$ defines the vertex-vicinity of $v$. $B$ can be computed
by setting weights of all regions to the minimum weight of $R$ (assume all weights are greater than zero) and applying
the continuous Fr\'echet distance algorithm
described in~\cite{Alt93}, where $p$ and $q$ are the number of vertices of the curves $P$ and $Q$, respectively.

For each edge $e=v_1v_2$ in $E$, we place Steiner points $v_{i,1}, v_{i,2},\ldots,v_{i,k_i}$ outside
the vertex-vicinity of $v_i$, for $i=1,2$, such that $|v_iv_{i,1}|=r(v_i)$, $|v_{i,j}v_{i,j+1}|=\epsilon d(v_{i,j})$,
for $j=1,2,\ldots,k_i-1$, and $v_{i,k_i}=v_e$.
It follows from~\cite{Ale00} that the number of Steiner points placed on an edge is
$O(C(e)1/\epsilon\log 1/\epsilon)$,
where $C(e)=O(\frac{|e|}{d(e)}\log\frac{|e|}{\sqrt{r(v_1)r(v_2)}})$. Let $N$ denote the total number of
Steiner points and vertices of $R$. Since we set $r(v)=\frac{\epsilon B}{nw_{max}(v)}$ for each $v\in R$, we have $N=O(C(R)(\frac{n}{\epsilon}(\log\frac{1}{\epsilon}+\log n)\log\frac{1}{\epsilon}))$, where $C(R)=\max_{e\in R}(\frac{|e|\log|e|w_{max}(e)}{d(e)B})$ is a function associated with the geometry of $R$, and $w_{max}(e)$ is the maximum weight among all weighted regions incident to the end points of $e$.

We refer to a line segment bounded by two consecutive Steiner points on an edge of $R$ as a \emph{Steiner edge}.
An \emph{hourglass} is the union of all leashes (line segments) intersecting the same sequence of Steiner edges in the same order. Let $H$ be an hourglass defined by a sequence of Steiner edges $\{e_1,e_2,\ldots,e_k\}$,
where $e_1\in P$ and $e_k\in Q$ (See Fig.~\ref{Steiner1}).

\begin{figure}[t]
    \begin{center}
    \leavevmode
    \includegraphics[height=2in]{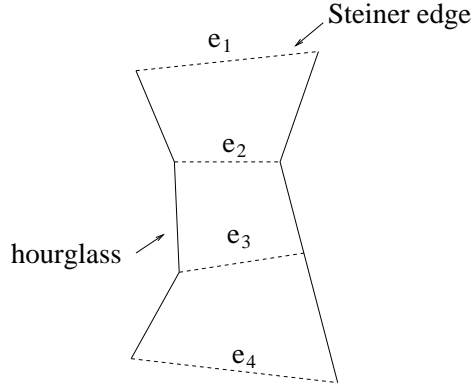}
    \caption{An hourglass is the union of leashes intersecting with the same set of Steiner edges.}
    \label{Steiner1}
    \end{center}
\end{figure}

\begin{lemma}
Let $l$ and $l'$ be two segments in $H$.  Then, $S(l)\leq (1+2\epsilon)S(l')+2\epsilon B$.
\end{lemma}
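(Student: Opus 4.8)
The plan is to bound the difference in weighted length between two leashes $l$ and $l'$ that share the same hourglass $H$, by decomposing each leash into the portions lying in the vertex-vicinities and the portions lying outside them, and controlling each type of contribution separately. Since both $l$ and $l'$ cross exactly the same sequence of Steiner edges $\{e_1,\ldots,e_k\}$ in the same order, they pass through the same sequence of triangles of $R$ (with the same weights $w_i$), so I can compare their weighted lengths region-by-region.

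\medskip

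\noindent\textbf{Outside the vertex-vicinities.} First I would handle the subsegments of $l$ and $l'$ that lie outside all vertex-vicinities. Here the key geometric fact, inherited from the discretization scheme of~\cite{Ale00}, is that consecutive Steiner points on an edge $e$ satisfy $|v_{i,j}v_{i,j+1}|=\epsilon\, d(v_{i,j})$, where $d(v_{i,j})$ is the clearance (minimum distance to non-incident edges). Because $l$ and $l'$ both cross the same Steiner edge $e_i$, their crossing points on $e_i$ differ by at most the length of that Steiner edge, which is at most $\epsilon\, d(\cdot)$ and hence at most $\epsilon$ times the distance travelled between consecutive crossings. Summing the weighted discrepancies over the $O(n)$ crossed regions and using the triangle inequality, the total weighted-length difference accumulated outside the vertex-vicinities is at most $2\epsilon\, S(l')$, which after rearranging gives the factor $(1+2\epsilon)$ in front of $S(l')$.

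\medskip

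\noindent\textbf{Inside the vertex-vicinities.} Next I would bound the contribution of the portions of the leashes inside vertex-vicinities, where the Steiner-edge argument breaks down because no Steiner points are placed there. Here I would instead use the explicit radius $r(v)=\frac{\epsilon B}{n\,w_{\max}(v)}$. A leash can enter the vicinity of a vertex $v$ and traverse a chord of length at most $2r(v)$, contributing weighted length at most $w_{\max}(v)\cdot 2r(v)=\frac{2\epsilon B}{n}$. Since a leash is a single line segment, it can pass through at most $O(n)$ distinct vertex-vicinities, so the total weighted length collected inside all vertex-vicinities is $O(\epsilon B)$; after absorbing the constants this yields the additive $2\epsilon B$ term. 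Crucially, $B$ is a \emph{lower bound} on $\delta_F(P,Q)$, so an additive error of $2\epsilon B$ is genuinely a relative error against the quantity we are approximating.

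\medskip

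\noindent\textbf{Main obstacle.} The delicate step is the outside-vicinity estimate: I must argue that, region by region, the difference $|w_i R_i(l)-w_i R_i(l')|$ telescopes into a quantity proportional to $\epsilon$ times the weighted length of $l'$ in that region (or an adjacent one), rather than proportional to an unrelated absolute length. This requires comparing, for each crossed edge $e_i$, how far apart $l$ and $l'$ meet $e_i$ and relating that gap to $\epsilon\, d(v_{i,j})$, then charging it against the segment of $l'$ between crossings $e_{i-1}$ and $e_i$ using the clearance lower bound from~\cite{Ale00}. Getting the bookkeeping right so the per-region errors sum to exactly $2\epsilon\, S(l')$ (and not a larger constant) is where the care is needed; the vertex-vicinity bound, by contrast, is a routine application of the chosen radius $r(v)$.
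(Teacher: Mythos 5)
Your proposal is correct and follows essentially the same route as the paper: a region-by-region triangle-inequality comparison in which the discrepancy at each crossed Steiner edge is charged to $\epsilon$ times the clearance (hence to $\epsilon\,S(l')$) outside the vertex-vicinities, and to the explicit radius $r(v)=\epsilon B/(n\,w_{\max}(v))$ summed over $O(n)$ vicinities to produce the additive $2\epsilon B$ term. The only cosmetic difference is that the paper bounds the lengths $|e_j|$ of the bounding Steiner edges directly rather than the leash's chord inside each vicinity, but the bookkeeping and the resulting constants are the same.
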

\begin{proof}
Let ${R_{i_1},R_{i_2},\ldots,R_{i_{k-1}}}$ be the weighted regions in $H$, such that $R_{i_j}$ is
between $e_j$ and $e_{j+1}$. Recall that for a region $R_k$ of $R$, $R_k(l)$ denotes the Euclidean length of $l$
within $R_k$. We have\\
$$S(l)= \sum_{j=1}^{k-1} w_{i_j} R_{i_j}(l) \leq \sum_{j=1}^{k-1} w_{i_j} (R_{i_j}(l')+{e_j}+e_{j+1} ),$$
where $w_{i_j}$ is the positive weight associated with $R_{i_j}$.

If a Steiner edge $e_j$ is outside of any vertex-vicinity, $|e_j| \leq \epsilon R_{i_{j-1}}(l')$ and
$|e_j| \leq \epsilon R_{i_j}(l')$, due to the placement of the Steiner points. If $e_j$ is incident to a vertex $v$, then $|e_j|=r(v)= ( \epsilon B) / (nw_{max}(v))$.
One segment can intersect at most $O(n)$ Steiner edges inside vertex vicinities. The result follows.
\hfill $\Box$
\end{proof}

Let $l_H$ be an arbitrary segment in $H$, with endpoints on $P$ and $Q$. Let $\alpha$ and $\beta$ be two
reparametrizations that define a
matching between $P$ and $Q$. Let $J=\{H_1,H_2,\ldots,H_k\}$ be the set of hourglasses traversed by
the leash $P(\alpha(r))Q(\beta(r))$. For an hourglass $H\in J$,
let $I_H=\{e|e=P(\alpha(r))Q(\beta(r)), r\in[0,1], e\in H\}$.
Let $H(\alpha,\beta)$ be the segment in $I_H$ with the largest weighted length.
That is, $S(H(\alpha,\beta))=\max_{e\in I_H}S(e)$.
Let $\delta(\alpha,\beta)=\sup_{r\in[0,1]} S(P(\alpha(r)),Q(\beta(r)))$.

\begin{lemma}
$|\max_{H\in J}S(l_H) - \delta(\alpha,\beta)|\leq 4\epsilon\delta(\alpha,\beta)$. \label{lemma2}
\end{lemma}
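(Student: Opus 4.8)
The plan is to reduce everything to Lemma~1 applied within each individual hourglass. Write $A=\max_{H\in J}S(l_H)$ and $D=\delta(\alpha,\beta)$, so the goal is $|A-D|\le 4\epsilon D$. My first step will be to establish the identity $D=\max_{H\in J}S(H(\alpha,\beta))$: since the leash $P(\alpha(r))Q(\beta(r))$ sweeps through exactly the hourglasses of $J$ as $r$ runs over $[0,1]$, the supremum over $r$ splits as a supremum over $H\in J$ of the supremum over the leashes of $I_H$, and the inner supremum is by definition $S(H(\alpha,\beta))$. Thus $D$ is itself a maximum taken over the same index set $J$ as $A$, which is what lets the two quantities be compared hourglass by hourglass.

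For the upper bound on $A-D$, I would let $H^{\ast}$ be an hourglass attaining $A=S(l_{H^{\ast}})$. Both $l_{H^{\ast}}$ and $H^{\ast}(\alpha,\beta)$ are segments of the single hourglass $H^{\ast}$, so Lemma~1 gives $S(l_{H^{\ast}})\le(1+2\epsilon)S(H^{\ast}(\alpha,\beta))+2\epsilon B$. Since $S(H^{\ast}(\alpha,\beta))\le D$, this yields $A\le(1+2\epsilon)D+2\epsilon B$, i.e.\ $A-D\le 2\epsilon D+2\epsilon B$. Symmetrically, choosing $H^{\ast\ast}$ attaining $D=S(H^{\ast\ast}(\alpha,\beta))$ and applying Lemma~1 the other way gives $D\le(1+2\epsilon)A+2\epsilon B$.

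The final step is to absorb the additive $2\epsilon B$ terms. Because $B$ is a lower bound on $\delta_F(P,Q)$ and $\delta_F(P,Q)=\inf_{\alpha,\beta}\delta(\alpha,\beta)\le\delta(\alpha,\beta)=D$, we have $B\le D$. Plugging this into $A-D\le 2\epsilon D+2\epsilon B$ gives $A-D\le 4\epsilon D$. For the reverse direction, I substitute $B\le D$ into $D\le(1+2\epsilon)A+2\epsilon B$ to get $(1-2\epsilon)D\le(1+2\epsilon)A$, so $A\ge\frac{1-2\epsilon}{1+2\epsilon}D$ and hence $D-A\le\frac{4\epsilon}{1+2\epsilon}D\le 4\epsilon D$. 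Combining the two bounds gives $|A-D|\le 4\epsilon D=4\epsilon\delta(\alpha,\beta)$. I expect the main obstacle to be the first step: justifying the decomposition $D=\max_{H\in J}S(H(\alpha,\beta))$ rigorously, so that the maxima over $J$ in $A$ and in $D$ range over the same set and Lemma~1 can be invoked term by term, together with the clean use of $B\le D$ to convert the additive error of Lemma~1 into a purely multiplicative one. The algebra itself is routine.
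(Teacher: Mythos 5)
Your proof is correct and follows essentially the same route as the paper: apply Lemma~1 in both directions to the two segments $l_H$ and $H(\alpha,\beta)$ within each hourglass, take maxima over $J$, and use $B\le\delta_F(P,Q)\le\delta(\alpha,\beta)$ to absorb the additive $2\epsilon B$ term, yielding the bounds $(1-4\epsilon)\delta(\alpha,\beta)\le\max_{H\in J}S(l_H)\le(1+4\epsilon)\delta(\alpha,\beta)$. If anything, you are more explicit than the paper, which leaves the decomposition $\delta(\alpha,\beta)=\max_{H\in J}S(H(\alpha,\beta))$ and the inequality $B\le\delta(\alpha,\beta)$ implicit (and writes $D$ for the quantity called $B$ in Lemma~1).
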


\begin{proof}
Applying $Lemma~1$, we have,
$$S(l_H)\leq (1+2\epsilon)S(H(\alpha,\beta))+2\epsilon D,$$
and
$$S(H(\alpha,\beta)) \leq (1+2\epsilon)S(l_H) + 2\epsilon D.$$
Assume $\epsilon < 1/2$, it follows,
$$\max_{H\in J}S(l_H)\geq \frac{1-2\epsilon}{1+2\epsilon}\delta(\alpha,\beta)\geq (1-4\epsilon)\delta(\alpha,\beta),$$
and
$$\max_{H\in J}S(l_H) \leq (1+2\epsilon)\max_{H\in J}S(H(\alpha,\beta))+2\epsilon D
 \leq (1+4\epsilon)\delta(\alpha,\beta).$$
\hfill $\Box$

\end{proof}

We say that $\delta(J)=\max_{H\in J}S(l_H)$ is a $4\epsilon$-approximation of $\delta(\alpha,\beta)$.
Given a sequence of hourglasses $J=\{H_1,H_2,\ldots,H_k\}$, we call $J$ a \emph{legal} sequence if there exists
two reparmetrizations $\alpha,\beta$ that define a leash traversing the same sequence of hourglasses as $J$.

\begin{lemma}
\label{lem-apr}
We can find a value $\delta'_F(P,Q)$ such that
$|\delta'_F(P,Q) - \delta_F(P,Q)| \leq 4\epsilon \delta_F(P,Q)$, that is, $\delta'_F(P,Q)$ is a
$4\epsilon$-approximation of $\delta_F(P,Q)$.
\end{lemma}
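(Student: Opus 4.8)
The plan is to combine Lemma~\ref{lemma2} with the existence of a discrete structure (the hourglass decomposition) that lets us search over all matchings via legal sequences. The key observation is that Lemma~\ref{lemma2} already gives, for \emph{any fixed} pair of reparametrizations $\alpha,\beta$, that $\delta(J) = \max_{H\in J} S(l_H)$ approximates $\delta(\alpha,\beta)$ to within a multiplicative $(1\pm 4\epsilon)$ factor, where $J$ is the sequence of hourglasses traversed by the leash under that matching. So the quantity $\delta(J)$ depends only on the sequence $J$ of hourglasses, not on the fine details of $\alpha,\beta$ within each hourglass. This is exactly what makes discretization work: there are only finitely many legal sequences $J$, since the number of hourglasses is finite (bounded by the arrangement of Steiner edges), and each legal sequence is determined combinatorially.

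First I would define $\delta'_F(P,Q) = \min_{J \text{ legal}} \delta(J)$, the minimum of $\delta(J)$ taken over all legal sequences of hourglasses. The claim then splits into two inequalities. For the upper bound direction, I would take an optimal (or near-optimal) pair $\alpha^*,\beta^*$ achieving $\delta_F(P,Q) = \delta(\alpha^*,\beta^*)$, let $J^*$ be the hourglass sequence it traverses; then $J^*$ is legal by construction, and Lemma~\ref{lemma2} gives $\delta(J^*) \le (1+4\epsilon)\delta(\alpha^*,\beta^*) = (1+4\epsilon)\delta_F(P,Q)$, whence $\delta'_F(P,Q) \le \delta(J^*) \le (1+4\epsilon)\delta_F(P,Q)$. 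For the lower bound direction, let $J'$ be the legal sequence attaining the minimum $\delta'_F(P,Q) = \delta(J')$; since $J'$ is legal there exist reparametrizations $\alpha',\beta'$ whose leash traverses exactly $J'$, and Lemma~\ref{lemma2} applied to this matching gives $\delta(J') \ge (1-4\epsilon)\delta(\alpha',\beta') \ge (1-4\epsilon)\delta_F(P,Q)$, using that $\delta_F(P,Q)$ is the infimum over all matchings. Combining the two bounds yields $(1-4\epsilon)\delta_F(P,Q) \le \delta'_F(P,Q) \le (1+4\epsilon)\delta_F(P,Q)$, which is precisely the stated $4\epsilon$-approximation.

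The main obstacle I expect is the lower bound step, specifically justifying that the minimizing sequence $J'$ is genuinely \emph{legal} and that a corresponding matching $\alpha',\beta'$ exists whose traversed hourglass sequence is exactly $J'$ — this is where the combinatorial definition of a legal sequence must be tied back to an actual monotone walk in the free space diagram. One must be careful that $\delta(J')$ is defined using the representative segments $l_H$ (arbitrary segments in each hourglass) rather than the matching-specific worst segments $H(\alpha',\beta')$, so applying Lemma~\ref{lemma2} requires that the bound there holds for the arbitrary representatives, which it does since Lemma~1 compares \emph{any} two segments in the same hourglass. A secondary subtlety is that the infimum defining $\delta_F$ may not be attained, so in the upper bound step one should work with a matching achieving value within an arbitrarily small additive slack of $\delta_F(P,Q)$ and take limits, or invoke that the optimal non-geodesic Fr\'echet matching is realized by a monotone path in the (now discretized) free space. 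Once legality is established, the remainder is a direct two-sided application of Lemma~\ref{lemma2} together with the optimality/minimality of the chosen sequences.
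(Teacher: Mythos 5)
Your proposal is correct and rests on the same two ingredients as the paper's proof: Lemma~\ref{lemma2} applied to the hourglass sequence traversed by an (essentially) optimal matching, giving $|\delta(\hat J)-\delta_F(P,Q)|\le 4\epsilon\,\delta_F(P,Q)$. The paper's proof stops there --- it only exhibits \emph{one} legal sequence whose value is close to $\delta_F(P,Q)$, i.e.\ it proves existence of a good approximating value rather than identifying a computable one. Your version goes further in a way that actually matters for the algorithm: by defining $\delta'_F(P,Q)=\min_{J\ \mathrm{legal}}\delta(J)$ and proving the lower bound $\delta(J')\ge(1-4\epsilon)\delta(\alpha',\beta')\ge(1-4\epsilon)\delta_F(P,Q)$ for the minimizing legal sequence, you justify why the quantity the algorithm actually computes (the min over monotone paths in $D$) cannot undershoot $\delta_F(P,Q)$ --- a step the paper defers implicitly to the later correspondence between monotone paths and legal sequences. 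Your two flagged subtleties (that Lemma~1 compares \emph{arbitrary} segments in an hourglass, so the representative $l_H$ is fine, and that the infimum defining $\delta_F$ may require an additive-slack-and-limit argument) are both real and correctly resolved. In short: same approach, but your write-up closes the lower-bound direction that the paper's own proof of this lemma leaves open.
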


\begin{proof}
Let $\hat{\alpha}$,$\hat{\beta}$ be the optimal reparametrizations that give
the Fr\'echet distance between $P$ and $Q$. Let $\hat{J}$ be the sequence of hourglasses traversed by the leash
defined by $\alpha$ and $\beta$.
Applying $Lemma~2$, we have,
$$|\delta(\hat{J}) - \delta(\hat{\alpha},\hat{\beta})|= |\delta(\hat{J})-\delta_F(P,Q)| \leq 4\epsilon\delta(\hat{\alpha},\hat{\beta}) = 4\epsilon\delta_F(P,Q).$$
So, there is a legal sequence of hourglasses that gives a $4\epsilon$-approximation of
$\delta_F(P,Q)$.
\hfill $\Box$
\end{proof}



\subsection{Fr\'echet Distance Between Two Segments}

Next, for simplicity, we study a special case of the problem, in which each curve consists of one line segment.
See Fig.~\ref{fig5} for an illustration.
Similar to Alt and Godau's algorithm, we attack the problem in the parameter space $D=[0,1]^2$,
where a leash $P(s)Q(t)$ is associated with a point $(s,t)\in D$. We refer to $(s,t)\in D$ as the \emph{dual
point} of the leash $P(s)Q(t)$.

\begin{lemma}
All leashes through a Steiner point $v$ correspond to a curve in $D$, with
equation $C_v: st+c_1s+c_2t+c_3=0$, where $c_1$, $c_2$, and $c_3$ are constants.
\end{lemma}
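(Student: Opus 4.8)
The plan is to translate the geometric condition ``the leash passes through $v$'' directly into an algebraic relation between $s$ and $t$ and then read off its degree. Write the two segments affinely as $P(s)=A_0+sA_1$ and $Q(t)=B_0+tB_1$, where $A_0$ and $A_1$ are the base point and direction vector of $P$, and $B_0,B_1$ play the same role for $Q$. A leash $P(s)Q(t)$ contains the fixed Steiner point $v$ if and only if the three points $P(s)$, $v$, $Q(t)$ are collinear, which is equivalent to the vanishing of the determinant
$$D(s,t)=\left|\begin{array}{ccc} P(s)_x & P(s)_y & 1 \\ v_x & v_y & 1 \\ Q(t)_x & Q(t)_y & 1 \end{array}\right|=0,$$
where $P(s)_x,P(s)_y$ denote the coordinates of $P(s)$, and similarly for $Q(t)$. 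I would use this collinearity formulation rather than substituting $v$ into the slope--intercept form $y=mx+p$ derived earlier, precisely because the determinant stays valid for vertical leashes, for which the slope $m$ blows up.

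Next I would pin down the degree of $D(s,t)$ as a polynomial in $s$ and $t$. The determinant is multilinear in its rows: the first row is affine in $s$, the third row is affine in $t$, and the middle row is constant. Hence every term in the permutation expansion contributes one factor affine in $s$ (from row $1$), one constant factor (from row $2$), and one factor affine in $t$ (from row $3$), so $D(s,t)$ has degree at most one in each variable separately. Therefore $D(s,t)=\alpha st+\beta s+\gamma t+\delta$ for constants $\alpha,\beta,\gamma,\delta$ depending only on the segment endpoints and on $v$, and its zero set $C_v$ already has the claimed bilinear shape up to normalization of the leading coefficient.

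The crux is to show that the coefficient $\alpha$ of the $st$ term is nonzero, so that dividing through yields the monic equation $st+c_1s+c_2t+c_3=0$ with $c_1=\beta/\alpha$, $c_2=\gamma/\alpha$, $c_3=\delta/\alpha$. I would isolate $\alpha$ by pairing the $s$-part of row $1$ with the $t$-part of row $3$ and expanding along the column of ones; this gives $\alpha=-(A_{1x}B_{1y}-A_{1y}B_{1x})$, the cross product of the two direction vectors. Thus $\alpha\neq 0$ exactly when $P$ and $Q$ are not parallel, which is the generic case; this is the one place the argument needs a non-degeneracy hypothesis and hence the main obstacle to an unconditional statement. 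In the degenerate parallel case the $st$ term drops out and $C_v$ degenerates to a line $c_1s+c_2t+c_3=0$, which I would record as a remark. Dividing by $\alpha$ and relabelling the constants then completes the proof and exhibits $C_v$ as a rectangular hyperbola in the parameter space $D$.
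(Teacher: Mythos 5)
Your proof is correct and follows essentially the same route as the paper: both substitute the affine parametrizations into the collinearity condition for $P(s)$, $v$, $Q(t)$ and observe that the resulting relation is bilinear in $(s,t)$. You are in fact more careful than the paper on two points it silently glosses over --- vertical leashes, which your determinant formulation handles but the paper's slope--intercept substitution does not, and the normalization to a monic $st$ coefficient, which fails exactly when $P$ and $Q$ are parallel and $C_v$ degenerates to a line.
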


\begin{proof}
Let $v=(a,b)$, $P(0)=(x_1,y_1)$, $P(1)=(x_2,y_2)$, $Q(0)=(x_3,y_3)$, $Q(1)=(x_4,y_4)$. We have
$$P(s)= (x_1+(x_2-x_1)s, y_1+ (y_2-y_1)s),$$
$$Q(t)= (x_3+(x_4-x_3)t, y_3+ (y_4-y_3)t),$$
and \\
$P(s)Q(t): y-(y_1+ (y_2-y_1)s) = \frac{y_1+ (y_2-y_1)s - (y_3+ (y_4-y_3)t)}{x_1+(x_2-x_1)s -  (x_3+(x_4-x_3)t)}(x-(x_1+(x_2-x_1)s) ).$\\
Since $P(s)Q(t)$ passes through $v$, we obtain that \\
$(b-y_1-(y_2-y_1)s)( x_1- x_3+(x_2-x_1)s - (x_4-x_3)t)) = (a-x_1-(x_2-x_1)s)(y_1-y_3 + (y_2-y_1)s -  (y_4-y_3)t)$, \\and thus we have
$$C_v: st+c_1s+c_2t+c_3=0,$$
where $c_1,c_2$, and $c_3$ are constants.
\hfill $\Box$
\end{proof}

\begin{figure}[t]
    \begin{center}
    \leavevmode
    \includegraphics[height=1.75in]{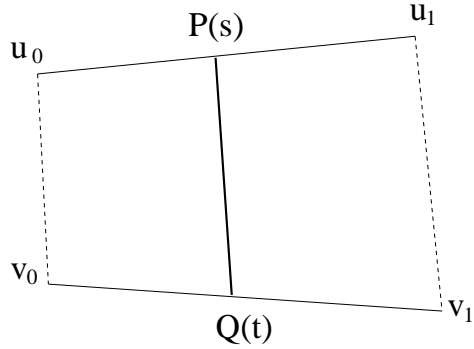}
    \caption{A special case of the Fr\'echet distance problem.}
    \label{fig5}
    \end{center}
\end{figure}

We call the curve $C_v$ the \emph{dual curve} of Steiner point $v$.
$C_v$ is continuous and monotone along both (s and t) axes.
Obviously, $v$ lies on a leash if and only if the dual point of the leash in $D$ lies on $C_v$.
Next we define the relative position of two leashes with respect to a Steiner point $v$.
Given two leashes $P(s)Q(t)$ and $P(s')Q(t')$, we say they are on the \emph{same side} of $v$ if and only if
there exists two continuous functions $\alpha', \beta': [0,1]\rightarrow[0,1]$, such that
$\alpha'(0)=s$, $\alpha'(1)=s'$, $\beta'(0)=t$, $\beta'(1)=t'$ and
$v\notin P(\alpha'(r)Q(\beta'(r)), \forall r\in[0,1]$. Note that $\alpha',\beta'$ are not necessarily
monotone. Intuitively, two leashes are on the same side of $v$, if one leash can transform to the other one by sweeping its end points on their respective curves without crossing $v$.

\begin{lemma}
$C_v$ divides $D$ into two partitions, each partition corresponding to all links on the same side of
point $v$.
\end{lemma}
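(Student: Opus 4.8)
The plan is to reduce the statement to a purely topological fact about the arc $C_v$ inside the square $D=[0,1]^2$, and then exploit the monotonicity already established. First I would observe that the relation ``same side of $v$'' is exactly ``lies in the same path-component of $D\setminus C_v$''. Indeed, by definition two leashes with dual points $(s,t)$ and $(s',t')$ are on the same side precisely when there exist continuous (not necessarily monotone) maps $\alpha',\beta'\colon[0,1]\to[0,1]$ joining $(s,t)$ to $(s',t')$ with $v\notin P(\alpha'(r))Q(\beta'(r))$ for all $r$. The pair $(\alpha',\beta')$ is nothing but a path $r\mapsto(\alpha'(r),\beta'(r))$ in $D$ from $(s,t)$ to $(s',t')$, and, using the characterization that $v$ lies on a leash iff its dual point lies on $C_v$, the condition $v\notin P(\alpha'(r))Q(\beta'(r))$ says exactly that this path never meets $C_v$. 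Hence the two ``sides'' are precisely the path-components of $D\setminus C_v$, and it remains to show that there are exactly two of them.

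Second, I would argue that $C_v\cap D$ is a single monotone arc whose two endpoints lie on $\partial D$. Solving $st+c_1s+c_2t+c_3=0$ for $t$ in terms of $s$ exhibits the branch meeting $D$ as the graph of a rational function that is continuous and monotone in both coordinates, as recorded in the preceding discussion; monotonicity rules out closed loops and self-crossings, so the intersection with $D$ is a single arc. Geometrically this is transparent: the lines through $v$ that meet both segments $P$ and $Q$ form one angular interval (the intersection of the two angular sectors subtended at $v$ by $P$ and by $Q$), and as the line rotates through this interval the intersection points $P(s)$ and $Q(t)$ slide monotonically along their segments, tracing one monotone arc. Its extreme positions occur when the rotating line passes through an endpoint of $P$ or $Q$, i.e. when $s\in\{0,1\}$ or $t\in\{0,1\}$, so the arc terminates on $\partial D$.

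Third, realizing $C_v\cap D$ as the graph $t=\phi(s)$ of a continuous monotone function over a subinterval of $[0,1]$ (closed up by the relevant edges of $\partial D$), the two sets $\{(s,t)\in D: t>\phi(s)\}$ and $\{(s,t)\in D: t<\phi(s)\}$ are disjoint, open, and each path-connected, and their union is $D\setminus C_v$. This exhibits exactly two components and finishes the proof via the reduction of the first step. I expect the main obstacle to be the second step, namely verifying cleanly that only one branch of the hyperbola $st+c_1s+c_2t+c_3=0$ enters $D$ and that it crosses from boundary to boundary; this is where one must control the location of the asymptotes $s=-c_2$ and $t=-c_1$ relative to $[0,1]$ and also note the degenerate possibility $C_v\cap D=\emptyset$ (in which case no leash passes through $v$ and all leashes are trivially on one side). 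The single-angular-interval observation above is what I would use to dispatch these cases uniformly.
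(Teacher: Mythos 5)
Your argument follows the same route as the paper's: both reduce ``same side of $v$'' to ``same path-component of $D\setminus C_v$'' via the correspondence between walks of the leash and paths in the parameter space, together with the fact that $v$ lies on a leash iff the dual point lies on $C_v$. The difference is one of completeness rather than of method: the paper's proof only establishes that dual points in different components of $D\setminus C_v$ cannot be joined by a walk avoiding $v$, and never shows that there are exactly two components (it leans on the bare assertion, made just before the lemma, that $C_v$ is continuous and monotone). Your steps 2 and 3 are an attempt to supply exactly that missing half, so your write-up addresses more of the statement than the paper's does.

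That said, the obstacle you flag in step 2 is real, and the angular-interval argument does not fully dispatch it, for two reasons. First, the set of unoriented line directions through $v$ that meet a given segment is an arc of the circle $\mathbb{R}/\pi$, and the intersection of two such arcs can be disconnected. Second, and more importantly, $st+c_1s+c_2t+c_3=0$ is the \emph{collinearity} locus of $P(s)$, $Q(t)$, $v$, so a piece of it can enter $D$ at parameters where $v$ lies on the line through $P(s)$ and $Q(t)$ but outside the segment $P(s)Q(t)$. Concretely, take $v$ at the origin, $P$ the horizontal segment from $(-1,0.1)$ to $(1,0.1)$, and $Q$ the vertical segment from $(1,-2.7)$ to $(1,2.7)$: the dual points of the collinear configurations in directions near $135^{\circ}$ (where $v$ does lie on the leash, e.g.\ the leash from $(-0.1,0.1)$ to $(1,-1)$) and near $30^{\circ}$ (where $v$ does not, e.g.\ $(0.173,0.1)$ to $(1,0.577)$) lie on two disjoint arcs of $C_v$ inside $D$, so $D\setminus C_v$ has three components, two of which contain leashes on the same side of $v$. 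Thus the single-arc claim, and with it the lemma as literally stated, can fail; the statement should either restrict $C_v$ to the sub-arc where $v$ actually lies on the closed segment $P(s)Q(t)$, or be weakened to the one direction the algorithm actually uses (points in the same cell of the arrangement correspond to leashes on the same side of every Steiner point, over-refinement of the arrangement being harmless). The paper's proof silently carries the same gap; yours at least names it and would be repaired by the same restriction.
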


\begin{proof}

Given two links $P(s)Q(t)$ and $P(s')Q(t')$, since there is a one-to-one correspondence between all possible walks from $P(s)Q(t)$ to $P(s')Q(t')$ and all paths from $(s,t)$ to $(s',t')$ in $D$, if $(s,t)$, $(s',t')$ are on different sides of $C_v$, all paths between $(s,t)$ and $(s',t')$ must pass $C_v$ at least once. That is, no walk exists between the two links without crossing $v$.
\hfill $\Box$
\end{proof}

We partition $D$ by the dual curves of the Steiner points. Using the algorithm in~\cite{AMA00}, the partition can be computed in
$O(N\log N+k)$ time and $O(N+k)$ space, where $N=O(C(R)(\frac{n}{\epsilon}(\log\frac{1}{\epsilon}+\log n)\log\frac{1}{\epsilon}))$ is the total number of Steiner points and $k$ is the number of cells,
which is $O(N^2)$ in the worst case. Let $A$ denote the partition.
\begin{lemma}
In $A$, each cell corresponds to an hourglass.
\end{lemma}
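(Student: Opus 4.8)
The plan is to show that the combinatorial type of a leash --- the sequence of Steiner edges it crosses, read in order along the leash --- is constant on each cell of $A$, and that this sequence is precisely what defines an hourglass. Fix a cell $\sigma$ of $A$ and pick any two dual points $(s,t),(s',t')\in\sigma$, with corresponding leashes $l=P(s)Q(t)$ and $l'=P(s')Q(t')$. Since $\sigma$ is a connected region of the arrangement, there is a path inside $\sigma$ joining the two dual points that meets no dual curve $C_v$. By the preceding lemma (that $C_v$ partitions $D$ into the two sides of $v$), such a path is realized by reparametrizations $\alpha',\beta'$ describing a continuous sweep of the leash from $l$ to $l'$ during which the leash passes through \emph{no} Steiner point $v$ --- otherwise its dual point would lie on $C_v$, contradicting that the path stays in the open cell. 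So it suffices to prove that the ordered sequence of Steiner edges crossed by a continuously moving leash can change only at the instant the leash passes through a Steiner point.

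To establish that, recall that every endpoint of a Steiner edge is a Steiner point (we also treat the vertices of $R$ as Steiner points). As the leash moves continuously, I would enumerate the events that can alter its ordered crossing sequence. First, a transversal crossing with the relative interior of a Steiner edge $e$ can appear or disappear only when the crossing point slides off an endpoint of $e$, i.e.\ when the leash passes through that endpoint, which is a Steiner point. The only alternative way to gain or lose a crossing would be for an endpoint of the leash itself to sweep across $e$; but the leash endpoints $P(s),Q(t)$ stay on the boundary segments $e_1\subseteq P$ and $e_k\subseteq Q$, and this event again coincides with $P(s)$ or $Q(t)$ reaching a Steiner point on $P$ or $Q$ (whose dual curves are the vertical/horizontal lines of $A$). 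Second, two crossing points on Steiner edges $e$ and $e'$ can exchange their order along the leash only when the leash passes through $e\cap e'$; in the triangulated subdivision two Steiner edges meet only at a shared endpoint, which is a vertex of $R$ and hence a Steiner point. In every case a change of the ordered crossing sequence forces the leash through a Steiner point.

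Combining the two steps, along the path inside $\sigma$ the ordered sequence of crossed Steiner edges never changes, so $l$ and $l'$ intersect the same sequence of Steiner edges in the same order; by definition they lie in the same hourglass. As $l,l'$ were arbitrary in $\sigma$, every leash whose dual point lies in $\sigma$ belongs to one and the same hourglass, so $\sigma$ corresponds to a unique hourglass. (For a two-sided correspondence one also notes that, by the preceding lemma, the leashes of a fixed hourglass are sign-consistent with respect to every $C_v$, so the map from cells to hourglasses is well defined.)

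The hard part will be the second paragraph --- certifying that the \emph{combinatorial} crossing pattern can only change on a dual curve. The subtle points are the boundary cases where an endpoint of the moving leash, rather than an interior crossing, is responsible for a transition (handled by the vertical/horizontal dual curves of the Steiner points on $P$ and $Q$), and the order-swapping event, which I control using the fact that Steiner edges of a triangulation meet only at common vertices. Once those transitions are pinned to dual curves, the connectivity argument for $\sigma$ does the rest.
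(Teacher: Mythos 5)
Your proof follows essentially the same route as the paper: points in the same cell of $A$ are joined by a path avoiding all dual curves, hence the corresponding leashes are connected by a sweep that passes through no Steiner point, hence they cross the same sequence of Steiner edges. In fact your second paragraph carefully justifies the step (that the crossing sequence can only change when the leash meets a Steiner point) which the paper's proof merely asserts with an ``i.e.'', so your argument is, if anything, more complete.
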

\begin{proof}
If two points $(s,t),(s',t')$ belong to the same cell, then there exists a walk between $P(s)Q(t)$ and $P(s')Q(t')$ that does not cross any Steiner point, i.e. $P(s)Q(t)$ and $P(s')Q(t')$ intersect with the same set of Steiner edges. If $(s,t), (s',t')$ belong to different cells, then any walk between $P(s)Q(t)$ and $P(s')Q(t')$ must cross at least one Steiner point. That is, $P(s)Q(t),P(s')Q(t')$ do not intersect with the same set of Steiner edges. The result follows.
\hfill $\Box$
\end{proof}

The Fr\'echet distance between $P$ and $Q$ can then be approximated as follows: \\
\noindent 1. Place Steiner points as described previously. \\
2. Partition $D$ by dual curves of all Steiner points. \\
3. For each cell $Z$, choose an arbitrary leash $l$ and assign $S(l)$ as the weight of the cell. \\
4. Find a monotone path $T$ in $D$, from its left bottom corner, $(0,0)$, to its top right corner, $(1,1)$, such
that the maximum weight of the cells traversed by $T$ is minimized. \\

Since the sequence of cells traversed by a monotone path in $D$ corresponds
to a legal sequence of hourglasses traversed by a leash following a match of $P$ and $Q$, and vice versa, by
Lemma~\ref{lem-apr} the maximum weight of the cells traversed by $T$ is a $4\epsilon$-approximation of the
Fr\'echet distance between $P$ and $Q$.

\subsection{Finding an optimal path in $D$}

We define the cost of a path between two points in $D$ as the maximum weight of the regions traversed by the
path.
We decompose $D$ by extending a horizontal as well as a vertical line from every vertex until it reaches the
boundary of $D$. See Fig.~\ref{decompose} for an illustration. Let the new subdivision be $D'$.

\begin{lemma}
Given an edge $e$ in $D'$, let $p$ be a point on $e$ and let $T_p$ be an arbitrary monotone path from
$(0,0)$, i.e. the bottom left corner of $D'$, to $p$. Then, there exists a monotone path from $(0,0)$ to any
point on $e$ with the same cost of $T_p$.
\label{monoPath}
\end{lemma}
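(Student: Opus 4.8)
The plan is to exploit two structural facts guaranteed by the construction of $D'$. First, the cost is constant on each cell of $D'$: every cell of $D'$ is contained in a cell of the arrangement $A$ and therefore inherits that cell's weight $S(l)$, so the cost of a monotone path is a \emph{bottleneck} quantity, namely the maximum cell-weight over the cells of $D'$ the path meets. Second, since $D'$ is obtained from $A$ by cutting with a full horizontal and a full vertical line through every vertex, no vertex of $D'$ lies in the relative interior of $e$; hence the (at most two) cells bounding $e$ are the same all along $e$. I would begin by classifying the edges of $D'$ as horizontal segments, vertical segments, or monotone arcs of the dual curves $C_v$, orient $e$ in its non-decreasing direction, and let $Z$ be the cell of $D'$ lying on the lower-left side of $e$, i.e. the side from which a monotone path must approach $e$.

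The key observation is that a monotone path can never touch the cell on the upper-right side of $e$: reaching a point of $e$ from that side would force a strictly decreasing coordinate, contradicting monotonicity. Consequently, any monotone path ending on $e$, in particular $T_p$, makes its final approach through $Z$, so $w(Z)\le c$, where $c=\mathrm{cost}(T_p)$. Using that vertical and horizontal lines pass through every vertex, so that each cell of $D'$ is trapezoid-like and $e$ spans the full extent of $Z$ on that side, I would then argue that \emph{every} point of $e$ is reachable from the interior of $Z$ by a monotone path that stays inside $\overline{Z}$.

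The construction follows. Given $T_p$ and a target point $p'$ on $e$, let $z$ be the last point of $T_p$ that lies in $\overline{Z}$ and is still dominated by $p'$ (both coordinates no larger). I replace the sub-path of $T_p$ from $z$ onward by a monotone sub-path from $z$ to $p'$ inside $\overline{Z}$, keeping the prefix of $T_p$ up to $z$ unchanged. The resulting path is monotone and ends at $p'$. Its cost does not increase, because the new tail lies in $\overline{Z}$ with $w(Z)\le c$; and its cost does not decrease, because the bottleneck cell of $T_p$ is either visited in the unchanged prefix or equals $Z$ (it cannot be the upper-right cell, which $T_p$ never visits), and in both cases it is retained. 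Hence the new path has cost exactly $c$, as claimed.

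The main obstacle is the \emph{backward} situation, in which $p'$ lies behind $p$ along the monotone direction of $e$: a forward extension along $e$ is then impossible, and the tail of $T_p$ must be re-routed so that it still reaches $p'$ monotonically yet never wanders into a cell costlier than $c$. This is precisely where the refinement by a horizontal and a vertical line through every vertex is indispensable: it makes $Z$ trapezoid-like, so that the $e$-side of $\partial Z$ is the single edge $e$ and a branch point $z$ dominated by $p'$ can always be found inside $\overline{Z}$, confining the re-routed tail to $\overline{Z}$. I would treat the three edge types uniformly through $Z$, checking separately the degenerate case in which $e$ is a decreasing arc of a dual curve (met only transversally by a monotone path) to confirm that the same cell-invariance argument still applies.
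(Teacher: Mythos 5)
Your re-routing is confined to the single cell $Z$ of $D'$ incident to $e$ on its dominated side, and that is where the argument breaks. The branch point you need --- the last point $z$ of $T_p$ that lies in $\overline{Z}$ \emph{and} is dominated by the target $p'$ --- need not exist. Concretely, let $e$ be a horizontal edge forming the top of $Z$, let $p$ lie near the right end of $e$ and $p'$ near the left end, and let $T_p$ run far to the right at small $t$ before climbing vertically to $p$: then $T_p\cap\overline{Z}$ is a short vertical segment whose $s$-coordinate exceeds $s(p')$, so none of its points is dominated by $p'$, and no tail confined to $\overline{Z}$ can reach $p'$. In this ``backward'' case --- which you correctly identify as the crux --- one is forced to branch off $T_p$ well before it enters $\overline{Z}$ and to re-route through a whole stack of cells below (or to the left of) $Z$; the trapezoid-like shape of $Z$ says nothing about the weights of those cells, so your cost bound does not follow. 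This is exactly the step the paper handles with a different, less local device: it passes to the horizontal and vertical slabs spanned by the endpoints of $e$ in the arrangement $D$, notes that by the construction of $D'$ these slabs contain no vertex of the arrangement in their interiors, so the dual-curve arcs clipped to a slab are pairwise disjoint and hence linearly ordered, and concludes that the set of arcs crossed by $T_p$ after it enters the slab can be reproduced, in the same order, by a monotone path ending at \emph{any} point of $e$. The cost is preserved because the whole sequence of cells crossed after entering the slab is preserved, not merely the last cell.

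Two smaller problems. First, your ``key observation'' that a monotone path can only reach $e$ from its lower-left cell fails when $e$ is an \emph{increasing} arc of a dual curve $C_v$: for such an edge the lower-left quadrant of a point $p\in e$ meets the cells on both sides of $e$, so a monotone path may arrive at $p$ from either cell and $Z$ is not well defined by your rule. (You flag the decreasing arcs as the delicate case; it is the other way around --- decreasing arcs are the ones a monotone path meets transversally and crosses at most once.) Second, even when your branch point exists, discarding the portion of $T_p$ between $z$ and $p$ may delete the bottleneck cell, since $T_p$ can leave $\overline{Z}$ and re-enter it across an increasing arc; the replacement path can then be strictly cheaper. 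This matters only because the lemma asserts equality of costs rather than an inequality, but it is another symptom of the single-cell viewpoint being too local.
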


\begin{proof}
We consider $e$ in $D$. Note that $D$ consists of dual lines of Steiner points only. In $D$, we extend a vertical line as well as a horizontal
line from the end points of $e$. Let $L_1, L_2$ be the two horizontal lines which define a
horizontal \emph{slab} and $L_3,L_4$ be the two vertical lines which define a vertical slab.
Obviously, $T_p$ has the same cost in $D$ as it does in $D'$.
Without loss of generality, assume $T_v$ enters the horizontal slab before it enters the vertical slab. Let $v$ be the entry point.
Let $S=\{s_1, s_2,\ldots,s_k\}$ be the set of curves on edges of $D$, which are bounded by the horizontal
slab and traversed by $T_p$. Since the horizontal slab does not contain any vertex of $D$ in its interior,
no two curves in $S$ intersect with each other.
Hence we can construct a monotone path from $v$ to any point on $e$ that traverses $S$, i.e.
we can contract a monotone path from $(0,0)$ to any point on $e$ such that it has the
same cost as $T_p$. See Fig.~\ref{slab} for an illustration. The same argument holds
if $T_p$ enters the vertical slab first. \hfill $\Box$

\hfill $\Box$
\end{proof}
\begin{figure}[t]
    \begin{center}
    \leavevmode
    \includegraphics[height=2in]{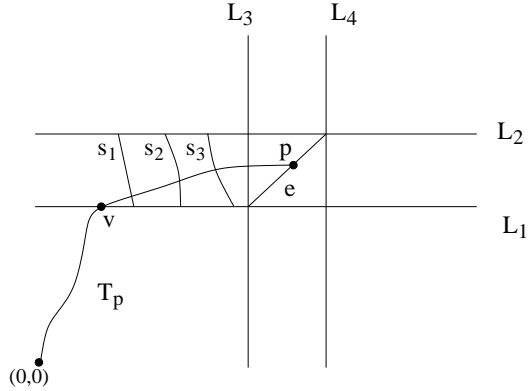}
    \caption{There exists a monotone path from $(0,0)$ to every point in $e$, which has the same cost as $T_p$.}
    \label{slab}
    \end{center}
\end{figure}

Thus, given an optimal monotone path $T_p$ from $(0,0)$ to an arbitrary point $p\in e$, we can construct a
monotone path from $(0,0)$ to any point in $e$, which has the same cost as $T_p$ and is also optimal.
To find the optimal monotone path from $(0,0)$ to $(1,1)$, we construct a directed graph $G$ having edges of $D'$ as vertices. In $G$, a directed edge is added from a node $v_1$ to a node
$v_2$ if, in $D'$, $v_1$ and $v_2$ are on the boundary of the same cell and there exists a monotone path from $v_1$ to $v_2$. The
optimal monotone path can be found by running a modified Dijkstra's algorithm on $G$. \\

\noindent {\bf Time complexity:} The complexity of $D'$ is $N^4$, where $N=O(C(R)(\frac{n}{\epsilon}(\log\frac{1}{\epsilon}+\log n)\log\frac{1}{\epsilon}))$ is
the number of Steiner points used, since each cell in $D'$ has $O(1)$ edges and vertices, and thus each
cell
contributes $O(1)$ edges to $G$. Then, $G$ has $O(N^4)$ vertices and $O(N^4)$ edges and Dijkstra's algorithm
takes $O(N^4\log N)$ time.

\subsection{Fr\'echet Distance Between Two Polygonal curves}

We extend the algorithm above to approximate the Fr\'echet distance between two polygonal chains $P$ and $Q$.
Recall
that $p$ and $q$ are the number of vertices of $P$ and $Q$, respectively.
The parameter space $D$ can be divided into $(p-1)(q-1)$ subspaces, such that each subspace corresponds to all leashes bounded by the same two segments, one
from each chain. Each subspace can be partitioned independently by introducing dual curves of Steiner points. The Fr\'echet distance $\delta_F(P,Q)$ can be approximated by finding an optimal monotone path from the bottom left
corner
to the top right corner of $D$. The complexity of $D$ is $O(pqN^2)$ and the complexity of the decomposed parameter space $D'$ is $O(p^2q^2N^4)$, where
$N=O(C(R)(\frac{n}{\epsilon}(\log\frac{1}{\epsilon}+\log n)\log\frac{1}{\epsilon}))$ is the total number of Steiner points. It takes $O(p^2q^2N^4\log(pqN))$ time to approximate $\delta_F(P,Q)$.

\begin{theorem}
An approximation of the Fr\'echet distance between two polygonal curves in a weighted subdivision can be computed in \\$O(p^2q^2N^4\log(pqN))$ time, where $N=O(C(R)(\frac{n}{\epsilon}(\log\frac{1}{\epsilon}+\log n)\log\frac{1}{\epsilon}))$ is the total number of Steiner points.
\end{theorem}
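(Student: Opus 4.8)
The plan is to read this theorem as the assembly of two ingredients that the section has already put in place: correctness of the discretized minimax path as an approximation, and a running-time accounting for the full two-curve algorithm. For correctness I would simply invoke the chain Lemma~1 $\to$ Lemma~\ref{lemma2} $\to$ Lemma~\ref{lem-apr}: the algorithm returns the cost of an optimal monotone path $T$ in $D'$, where the cost of a path is the maximum weight of the cells it traverses; since the cell sequences of monotone paths are exactly the legal hourglass sequences, Lemma~\ref{lem-apr} guarantees that this optimum is a $4\epsilon$-approximation of $\delta_F(P,Q)$. Replacing $\epsilon$ by $\epsilon/4$ throughout the Steiner-point placement turns this into the advertised $(1+\epsilon)$-factor (the constant is absorbed into $N$), so no new approximation argument is needed and essentially the entire burden of the proof falls on the time bound.

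For the time bound I would account for the cost stage by stage. First, placing the Steiner points yields $N=O(C(R)(\tfrac{n}{\epsilon}(\log\tfrac{1}{\epsilon}+\log n)\log\tfrac{1}{\epsilon}))$ points. Next, the parameter space $D=[0,1]^2$ splits into $(p-1)(q-1)=O(pq)$ subspaces, one per pair of segments; within each subspace the leashes through the $O(N)$ Steiner points give $O(N)$ monotone dual curves $C_v$, whose arrangement has $O(N^2)$ cells and can be built in $O(N\log N+N^2)$ time by~\cite{AMA00}. Hence $D$ carries $O(pqN^2)$ cells and, in particular, $O(pqN^2)$ vertices. Forming $D'$ by shooting one horizontal and one vertical ray from each such vertex produces an arrangement of $O(pqN^2)$ horizontal and $O(pqN^2)$ vertical segments, whose complexity is $O((pqN^2)^2)=O(p^2q^2N^4)$, each face having $O(1)$ bounding edges.

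The final stage is the search. I would build the directed graph $G$ with one node per edge of $D'$ and an arc from $v_1$ to $v_2$ whenever the two edges bound a common cell and a monotone path connects them; by Lemma~\ref{monoPath} a monotone path reaching one point of an edge can be rerouted to reach any other point of that edge at the same bottleneck cost, so collapsing every edge of $D'$ to a single node is sound and each cell contributes only $O(1)$ arcs. Thus $G$ has $O(p^2q^2N^4)$ nodes and arcs, and the optimal monotone minimax path from $(0,0)$ to $(1,1)$ is found by a modified (bottleneck) Dijkstra in $O(p^2q^2N^4\log(p^2q^2N^4))=O(p^2q^2N^4\log(pqN))$ time, which is the claimed bound.

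The step I expect to be the main obstacle is not any single computation but the justification that the continuous minimax-over-monotone-paths problem genuinely reduces to a bottleneck shortest-path problem on the finite graph $G$. This hinges entirely on Lemma~\ref{monoPath}: one must be sure that an optimal monotone path can always be assumed to cross each edge of $D'$ at a canonical point without changing its cost, so that a path is faithfully represented by the sequence of $D'$-edges it visits. A secondary subtlety is bookkeeping the complexity of $D'$ correctly: the rays emanating from vertices of one subspace cross into neighboring subspaces, so the arrangement is global rather than per-subspace, and it is exactly this global coupling that raises the bound from $O(pqN^2)$ to $O(p^2q^2N^4)$ and must not be undercounted.
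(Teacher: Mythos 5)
Your proposal is correct and follows essentially the same route as the paper: the approximation guarantee is inherited from Lemmas~1--3 via the cell/hourglass correspondence, and the running time comes from the $O(pqN^2)$ cells of $D$, the $O(p^2q^2N^4)$ complexity of the ray-shot decomposition $D'$, and a modified Dijkstra on the edge-adjacency graph $G$ justified by Lemma~\ref{monoPath}. Your added remarks (rescaling $\epsilon$ to get a clean $(1+\epsilon)$ factor, and the observation that the rays make $D'$ a global rather than per-subspace arrangement) are consistent with, and slightly more explicit than, the paper's own account.
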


\section{Geodesic Fr\'echet Distance}

In this section we study two versions of the geodesic Fr\'echet distance problem. We do not require
the leash to be homotopic, i.e. the leash is allowed to sweep discontinuously without penalty. For example, the leash can pass through or jump over obstacles.

\subsection{Geodesic Fr\'echet Distance in Weighted Regions in $\mathcal{R}^2$}

Recall that the cost (weighted length) of a path in $R$ is defined as the weighted sum of its Euclidean
lengths within each region of $R$. A geodesic path between two points in $R$ is a path between those points that
has minimum cost. Here, the distance between two points in $R$ is the cost of the geodesic path between those
points.

We prove that the geodesic Fr\'echet distance can be approximated by the discrete geodesic Fr\'echet distance.
Given two polygonal curves $P,Q$, to approximate $\delta_F(P,Q)$, we need to add additional vertices to $P$ and
$Q$. We follow a similar approach as in~\cite{Ale00}, except that we define the vertex radius of a
vertex $v$ in $P$ or $Q$ as $r(v)=\frac{\epsilon B}{w_{max}(v)}$, where $\epsilon$ is a positive real number
defining the quality of the approximation, $B$ is a lower
bound on $\delta_F(P,Q)$, and $w_{max}(v)$ is the maximum weight of regions incident
to $v$. 
Recall
that the disk of
radius $r(v)$ centered at $v$ defines the vertex-vicinity of $v$. New vertices are placed on edges of $P$ nd $Q$
forming a geometric progression with ratios depending on $\epsilon$ and on the geometry of $R$
(see Section $3.1$ for details). The total number of additional vertices introduced on each edge is
$O(C(R)\frac{1}{\epsilon} \log^2 \frac{1}{\epsilon})$, where $C(R)$ is a constant associated with geometry of $R$. Let $P'$ and $Q'$ be the new polygonal chains and let $p'=O(C(R)\frac{p}{\epsilon} \log^2 \frac{1}{\epsilon})$ and $q'=O(C(R)\frac{q}{\epsilon} \log^2 \frac{1}{\epsilon})$ be the number of vertices of $P'$ and $Q'$,
respectively.

\begin{figure}[t]
    \begin{center}
    \leavevmode
    \includegraphics[height=2in]{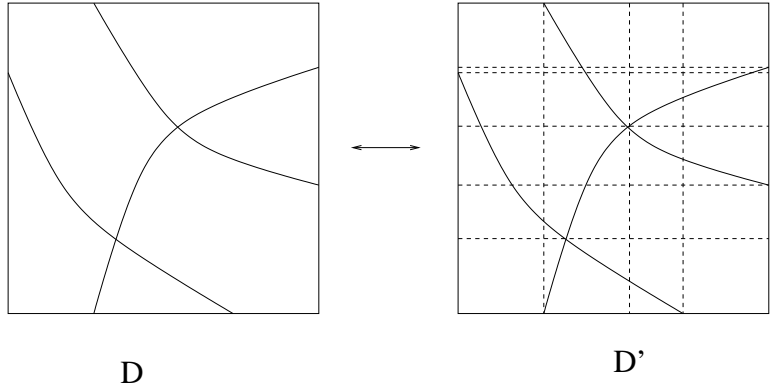}
    \caption{Decomposition of $D$.}
    \label{decompose}
    \end{center}
\end{figure}

\begin{lemma}
The discrete geodesic Fr\'echet distance between $P'$ and $Q'$ gives an $\epsilon$-approximation of the geodesic Fr\'echet distance between $P$ and $Q$, i.e. $(1-\epsilon)\delta_F(P,Q) \leq \delta_{dF}(P',Q') \leq (1+\epsilon)\delta_F(P,Q)$.
\label{disFD}
\end{lemma}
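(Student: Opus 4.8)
The plan is to prove the two inequalities separately, in both cases reducing everything to a single \emph{snapping} estimate that is the geodesic analogue of Lemma~1: a point on $P$ (resp.\ $Q$) may be replaced by a nearby vertex of $P'$ (resp.\ $Q'$) at small geodesic cost. Concretely, I would first establish that whenever $p$ and $p'$ lie within one Steiner edge of each other on $P'$ and $q$ is any point on $Q'$, then
$$S(p',q)\leq (1+c\epsilon)\,S(p,q)+\epsilon B,$$
with $c$ an absolute constant (possibly depending on the weight ratio $\rho=w_{max}/w_{min}$), and symmetrically with the roles of $P'$ and $Q'$ exchanged and with $p,p'$ interchanged. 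The proof of this estimate rests on the triangle inequality for the geodesic metric, $S(p',q)\leq S(p',p)+S(p,q)$, together with a bound on the snapping term $S(p',p)$. When $p,p'$ lie in the vertex-vicinity of a vertex $v$, their separation is at most $r(v)$, so $S(p',p)\leq w_{max}(v)\,r(v)=\epsilon B$, which produces the additive term. When they lie outside all vertex-vicinities, the progression spacing gives separation at most $\epsilon\,d(u_i)$ and hence $S(p',p)\leq w_{max}\,\epsilon\,d(u_i)$; this cost is charged against the geodesic realizing $S(p,q)$ exactly as in the discretization analysis of~\cite{Ale00}, since that geodesic must pay at least a constant fraction of $w_{max}\,d(u_i)$ before it can leave the local region, yielding the multiplicative term.

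Granting this estimate, the upper bound $\delta_{dF}(P',Q')\leq (1+\epsilon)\delta_F(P,Q)$ follows by starting from optimal reparametrizations $\hat{\alpha},\hat{\beta}$ realizing $\delta_F(P,Q)$ and converting them into a discrete coupling of $P'$ and $Q'$. As $r$ increases, $P(\hat{\alpha}(r))$ and $Q(\hat{\beta}(r))$ sweep monotonically along the edges of $P'$ and $Q'$; recording at each $r$ the left endpoints $u_i,v_j$ of the Steiner edges currently containing the two leash endpoints produces a monotone staircase of index pairs, which I would complete to a valid coupling by inserting the intermediate unit steps. Applying the snapping estimate to each coupled pair and using $B\leq\delta_F(P,Q)\leq\sup_{r} S(P(\hat{\alpha}(r)),Q(\hat{\beta}(r)))$ to absorb the additive $\epsilon B$ into the multiplicative factor gives $S(u_i,v_j)\leq (1+\epsilon)\delta_F(P,Q)$ for every coupled pair, hence the claimed bound on $\delta_{dF}(P',Q')$ after rescaling $\epsilon$.

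The lower bound $(1-\epsilon)\delta_F(P,Q)\leq\delta_{dF}(P',Q')$ is the mirror image: take an optimal coupling realizing $\delta_{dF}(P',Q')$ and read it as a monotone walk along $P'$ and $Q'$, i.e.\ along $P$ and $Q$, linearly interpolating between consecutive coupled vertices. For any leash $P(s)Q(t)$ arising during this walk, $P(s)$ lies within one Steiner edge of some coupled vertex $u_i$ and $Q(t)$ within one Steiner edge of the paired $v_j$, so two applications of the snapping estimate give $S(P(s),Q(t))\leq (1+c\epsilon)^2 S(u_i,v_j)+2\epsilon B\leq (1+\epsilon)\delta_{dF}(P',Q')$, again absorbing the additive term via $B\leq\delta_F(P,Q)\leq (1+\epsilon)\delta_{dF}(P',Q')$ from the upper bound. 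This exhibits a continuous matching of $P$ and $Q$ all of whose leashes are at most $(1+\epsilon)\delta_{dF}(P',Q')$, so $\delta_F(P,Q)\leq (1+\epsilon)\delta_{dF}(P',Q')$, which rearranges to the stated inequality.

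The main obstacle is the snapping estimate outside the vertex-vicinities, where a purely additive $\epsilon B$ bound fails because $w_{max}\,d(u_i)$ need not be comparable to $B$; there the multiplicative charging of the perturbation cost against the geodesic leash, inherited from the Aleksandrov discretization~\cite{Ale00}, must be carried out carefully, and I expect the weight ratio $\rho$ to enter the constant $c$ unless one reuses the precise geometric charging of that scheme. A secondary, routine but necessary, technical point is to verify that the index sequences produced in both directions are genuinely monotone and can be completed to legal couplings and matchings without increasing the maximum leash length.
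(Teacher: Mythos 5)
Your proposal follows essentially the same route as the paper's proof: a triangle-inequality snapping estimate whose additive term $\epsilon B$ comes from the vertex-vicinity radius $r(v)=\epsilon B/w_{max}(v)$ and whose multiplicative term comes from the Aleksandrov-style spacing $\epsilon\,d(\cdot)$ outside the vicinities, combined with the monotone staircase coupling extracted from the optimal reparametrizations. The only difference is that you treat the lower bound explicitly by interpolating an optimal discrete coupling back into a continuous matching, a direction the paper asserts with less detail; this is a sound (and slightly more careful) completion of the same argument.
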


\begin{proof}
Here, $d(a,b)$ denotes the weighted length of the line segment $ab$.
Let $\alpha, \beta$ be two optimal reparametrizations between $P'$ and $Q'$. Obviously, they are also an optimal matching
between $P$ and $Q$. Following~\cite{Eiter94}, we define a coupling $L$ between $\delta(P')=\{u_1,u_2,\ldots, u_{p'}\}$ and
$\delta(Q')=\{v_1,v_2,\ldots, v_{q'} \}$.
For each point $u\in \delta(P')$, let $t(u)$ be the smallest value such that $P'(\alpha(t(u)))= u$, and for each point $v$ in
$\delta(Q')$, let $s(v)$ be the smallest value such that $Q'(\beta(s(v)))=v$. First, we add $(u_1,v_1)$ to $L$. Let $i$ and $j$
be the number of distinct points in $\delta(P')$ and $\delta(Q')$, respectively, added to $R$. If $j=q'$ or $t(u_{i+1})\le
s(v_{j+1})$, then add $(u_{i+1}, v_j)$ to $L$. We have
$$d(u_{i+1},v_j) \leq S(P'(\alpha(t(u_{i+1})))Q'(\beta(t(u_{i+1})) ) ) + d(v_j,v_{j+1})$$ and
$$S(P'(\alpha(t(u_{i+1})))Q'(\beta(t(u_{i+1})) ) ) \leq d(u_{i+1},v_j) + d(v_j,v_{j+1}).$$
If $v_jv_{j+1}$ is not contained in
any vertex vicinity, we have $$d(v_j,v_{j+1}) \leq \epsilon S(P'(\alpha(t(u_{i+1}))),Q'(\beta(t(u_{i+1})) )) \leq \epsilon
\delta_F(P,Q).$$ If $v_jv_{j+1}$ is inside a vertex vicinity, we have $$d(v_j,v_{j+1}) \leq \epsilon B \leq \epsilon\delta_F(P,Q).$$
We obtain that
$$|S(P'(\alpha(t(u_{i+1}))),Q'(\beta(t(u_{i+1})) ) ) - d(u_{i+1}, v_j) | \leq \epsilon \delta_F(P,Q).$$
If $i=p'$ or $t(u_{i+1}) > s(v_{j+1})$, then we add $(u_i,v_{j+1})$ to $L$. Similarly, we have\\
$$|S(P'(\alpha(s(v_{j+1}))),Q'(\beta(s(v_{j+1})) ) ) -  d(u_{i},v_{j+1}) |\leq \epsilon \delta_F(P,Q),$$
and thus
$$(1-\epsilon)\delta_F(P,Q) \leq \delta_{dF}(P',Q') \leq (1+\epsilon)\delta_F(P,Q)$$
\hfill $\Box$
\end{proof}

Instead of finding an exact geodesic path between two points, we can apply existing shortest path algorithms in weighted regions, which find an approximate path.
Let $\delta'_{dF}(P',Q')$ be the approximate discrete Fr\'echet distance computed by replacing the exact
shortest path algorithm by the approximation algorithm, which gives an $\epsilon$-approximation of the geodesic distance between points. We have
$$\delta'_{dF}(P',Q') \leq (1+\epsilon)\delta_dF(P,Q) \leq (1+\epsilon)^2\delta_F(P,Q) \leq (1+3\epsilon)\delta_F(P,Q).$$ 
Similarly, assuming that $\epsilon \leq 1/3$, we have  
$$\delta'_{dF}(P',Q') \ge (1-3\epsilon)\delta_F(P,Q).$$ 
Thus, we obtain
$$|\delta'_{dF}(P',Q')- \delta_F(P,Q)| \leq 3\epsilon\delta_F(P,Q).$$
If a non-query based approximation algorithm is used, $\delta'_{dF}(P',Q')$ can be computed in time
$O(p'q'T(n,\epsilon))= O(C(R)^2\frac{pq}{\epsilon^2} (\log^4 \frac{1}{\epsilon}) T(n,\epsilon))$,
where $T(n,\epsilon)$ is the time
to compute an approximate shortest path between two points.
If a query-based approximation algorithm is used, then $\delta'_{dF}(P',Q')$ can be computed in\\
$O(p'q'QUERY(n,\epsilon)+PRE(n,\epsilon))= O(C(R)^2\frac{pq}{\epsilon^2} (\log^4 \frac{1}{\epsilon}) QUERY(n,\epsilon)+PRE(n,\epsilon))$, where
$QUERY(n,\epsilon)$ is the query time and $PRE(n,\epsilon)$ is the preprocessing time of the algorithm.
For example, if we use the algorithm proposed by Aleksandrov et. al.~\cite{Ale08}, the algorithm takes $O(C(R)^2\frac{pq}{\epsilon^2} (\log^4 \frac{1}{\epsilon})\bar{q} + \frac{(g+1)n^2}{\epsilon^{2/3}\bar{q}}\log
\frac{n}{\epsilon}\log^4\frac{1}{\epsilon})$ time, where $\bar{q}$ is a query time
parameter and $g$ is the genus of the graph constructed by the discretization scheme.

\begin{theorem}
A $3\epsilon$-approximation of the Fr\'echet distance between two polygonal curves in a $2D$ weighted subdivision $R$ can be computed in $O(C(R)^2\frac{pq}{\epsilon^2} (\log^4 \frac{1}{\epsilon})\bar{q} + \frac{(g+1)n^2}{\epsilon^{2/3}\bar{q}}\log
\frac{n}{\epsilon}\log^4\frac{1}{\epsilon})$ time, where $\bar{q}$ is a query time
parameter, $g$ is the genus of the graph constructed by the discretization scheme, $C(R)$ is a constant associated with geometry of $R$, and $n$ is the number of vertices of $R$.
\end{theorem}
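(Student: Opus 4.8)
The plan is to assemble the final theorem directly from the pieces already established in Section~3.1. The target time bound is exactly the running time we derived for the query-based computation of $\delta'_{dF}(P',Q')$ using the algorithm of Aleksandrov et al.~\cite{Ale08}, and the approximation quality is exactly the $3\epsilon$ guarantee we obtained just before the theorem statement. So the proof is essentially a matter of chaining together Lemma~\ref{disFD} with the error analysis of the approximate shortest-path subroutine, and then reading off the complexity from the counts $p'=O(C(R)\frac{p}{\epsilon}\log^2\frac{1}{\epsilon})$ and $q'=O(C(R)\frac{q}{\epsilon}\log^2\frac{1}{\epsilon})$.

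First I would restate the two-sided error chain. By Lemma~\ref{disFD} we already have $(1-\epsilon)\delta_F(P,Q)\le \delta_{dF}(P',Q')\le (1+\epsilon)\delta_F(P,Q)$, i.e. the discrete geodesic Fr\'echet distance on the refined curves $P',Q'$ approximates the true geodesic Fr\'echet distance. The only additional source of error is that we never compute exact geodesic distances between pairs of (Steiner-refined) vertices; instead we invoke a $(1+\epsilon)$-approximate weighted shortest-path routine. Composing the two $(1+\epsilon)$ factors gives $\delta'_{dF}(P',Q')\le(1+\epsilon)^2\delta_F(P,Q)\le(1+3\epsilon)\delta_F(P,Q)$ and, under the assumption $\epsilon\le 1/3$, the matching lower bound $\delta'_{dF}(P',Q')\ge(1-3\epsilon)\delta_F(P,Q)$, hence $|\delta'_{dF}(P',Q')-\delta_F(P,Q)|\le 3\epsilon\,\delta_F(P,Q)$. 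This is precisely the displayed inequality preceding the theorem, so that part is immediate.

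Next I would account for the running time. The discrete Fr\'echet distance between $P'$ and $Q'$ is computed by the standard dynamic program of~\cite{Eiter94} over the $p'\times q'$ coupling table, where each table entry requires one (approximate) geodesic-distance evaluation between a vertex of $P'$ and a vertex of $Q'$. With a query-based routine this costs $QUERY(n,\epsilon)$ per entry plus a one-time $PRE(n,\epsilon)$ preprocessing charge, giving $O(p'q'\,QUERY(n,\epsilon)+PRE(n,\epsilon))$. Substituting $p'q'=O(C(R)^2\frac{pq}{\epsilon^2}\log^4\frac{1}{\epsilon})$ and plugging in the specific query time $QUERY(n,\epsilon)=O(\bar q)$ and preprocessing time $PRE(n,\epsilon)=O(\frac{(g+1)n^2}{\epsilon^{2/3}\bar q}\log\frac{n}{\epsilon}\log^4\frac{1}{\epsilon})$ from~\cite{Ale08} yields exactly the claimed bound $O(C(R)^2\frac{pq}{\epsilon^2}(\log^4\frac{1}{\epsilon})\bar q+\frac{(g+1)n^2}{\epsilon^{2/3}\bar q}\log\frac{n}{\epsilon}\log^4\frac{1}{\epsilon})$.

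The step demanding the most care is verifying that $\alpha,\beta$ optimal for $P',Q'$ also serve as an optimal matching for $P,Q$, since inserting Steiner vertices along existing edges does not change the curves as point sets but does change which vertex pairs the discrete measure is allowed to sample. That observation is what Lemma~\ref{disFD} relies on, and here I would simply cite it rather than reprove it; the main obstacle is bookkeeping rather than any genuine difficulty, namely confirming that the two independent $(1+\epsilon)$ relaxations (coarsening to the discrete measure, and approximating each shortest path) compose cleanly and that the constant $3$ absorbs the cross term $\epsilon^2$ under $\epsilon\le 1/3$. Once those are in hand, the theorem follows by direct substitution, so the proof is short and essentially a summary of the preceding discussion.
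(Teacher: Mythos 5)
Your proposal is correct and follows essentially the same route as the paper: the theorem is obtained by combining Lemma~\ref{disFD} with the extra $(1+\epsilon)$ factor from the approximate shortest-path queries to get the $3\epsilon$ bound (under $\epsilon\le 1/3$), and the running time is $O(p'q'\,QUERY(n,\epsilon)+PRE(n,\epsilon))$ with the Aleksandrov et al.\ query structure substituted in. No gaps.
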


\subsection{Geodesic Fr\'echet Distance in $\mathcal{R}^3$ with Obstacles}

In this subsection, we briefly discuss the geodesic Fr\'echet distance problem in 1 or $\infty$ weighted regions
in $\mathcal{R}^3$ (that is, among obstacles in $\mathcal{R}^3$).
Let $R$ be a weighted subdivision in $\mathcal{R}^3$ with a total of $n$ vertices.
The weight of each region $R_i \in R$ is either 1 or $\infty$. Given two polygonal curves $P$ and $Q$ in $R$,
we want to find the Fr\'echet distance between $P$ and $Q$, where the distance between two points in $R$ is defined as the
length of the geodesic path between those points, i.e. the length of the shortest obstacle-avoiding path.

We set $r(v)=\epsilon B$ and add additional vertices on $P$ and $Q$ as described in Section 4.1. Let $P'$ and $Q'$ be the new curves.
\begin{lemma}
The discrete Fr\'echet distance between $P'$ and $Q'$ gives an $\epsilon$-approximation of the Fr\'echet distance between $P$ and $Q$, i.e. $(1-\epsilon)\delta_F(P,Q) \leq \delta_{dF}(P',Q') \leq (1+\epsilon)\delta_F(P,Q)$.
\end{lemma}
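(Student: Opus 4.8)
The plan is to mirror the structure of the proof of Lemma~\ref{disFD} almost verbatim, since the $\mathcal{R}^3$ obstacle setting is just the special case of weighted regions where every weight is $1$ or $\infty$ and the distance $S$ between two points is now the geodesic (shortest obstacle-avoiding) path length. First I would fix two optimal reparametrizations $\alpha,\beta$ realizing $\delta_F(P,Q)$ and observe that, because $P'$ and $Q'$ are obtained from $P$ and $Q$ by inserting Steiner vertices along existing edges, the same pair $\alpha,\beta$ is still an optimal matching between $P$ and $Q$. I would then build the coupling $L$ between $\delta(P')$ and $\delta(Q')$ exactly as in Lemma~\ref{disFD}: process the vertices in the order induced by $t(u)=\min\{r: P'(\alpha(r))=u\}$ and $s(v)=\min\{r:Q'(\beta(r))=v\}$, advancing whichever chain lags behind, so that each coupled pair $(u_{i+1},v_j)$ (or $(u_i,v_{j+1})$) differs from the true leash at that parameter value by at most one geodesic edge-step $d(v_j,v_{j+1})$ (resp.\ $d(u_i,u_{i+1})$).

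The key estimate is to bound that one edge-step. Here the triangle inequality for the geodesic metric $S$ gives
$$d(u_{i+1},v_j) \leq S(P'(\alpha(t(u_{i+1}))),Q'(\beta(t(u_{i+1})))) + d(v_j,v_{j+1})$$
and the symmetric inequality, so the whole argument reduces to showing $d(v_j,v_{j+1}) \leq \epsilon\,\delta_F(P,Q)$. I would split into the two cases used before. If the segment $v_jv_{j+1}$ lies outside every vertex-vicinity, the geometric-progression placement of Steiner points guarantees $d(v_j,v_{j+1}) \leq \epsilon\, S(\cdot,\cdot) \leq \epsilon\,\delta_F(P,Q)$, exactly as in Section~4.1. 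If $v_jv_{j+1}$ lies inside a vertex-vicinity, then by the choice $r(v)=\epsilon B$ its length is at most $\epsilon B \leq \epsilon\,\delta_F(P,Q)$, where $B$ is the lower bound on $\delta_F(P,Q)$. Chaining these bounds over the coupling and taking the maximum yields $(1-\epsilon)\delta_F(P,Q)\leq \delta_{dF}(P',Q')\leq(1+\epsilon)\delta_F(P,Q)$.

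The one genuine obstacle, and the only place where the $\mathcal{R}^3$/obstacle setting differs from the planar weighted case, is justifying the ``outside-vicinity'' inequality $d(v_j,v_{j+1}) \leq \epsilon\, S(P'(\alpha(t(u_{i+1}))),Q'(\beta(t(u_{i+1}))))$ when $S$ is a geodesic distance rather than a weighted straight-line length. In the weighted-region proof this followed because the Euclidean edge length is comparable to the portion of the straight leash near that edge; with obstacles I must instead argue that a short hop $v_jv_{j+1}$ along the curve is itself unobstructed (it lies on an original edge of $P$ or $Q$, hence in free space, so $d(v_j,v_{j+1})$ equals the Euclidean length $|v_jv_{j+1}|$), and that this Euclidean length is at most $\epsilon$ times the geodesic leash length at that parameter. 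The latter holds because the geodesic leash length is at least the Euclidean distance between its endpoints, which the Steiner spacing $|v_jv_{j+1}|=\epsilon\,d(v_{j})$ was designed to dominate. I expect verifying this free-space/geodesic comparison to be the main technical point; the rest of the proof is a routine transcription of Lemma~\ref{disFD}, so I would present those parts briefly and concentrate the write-up on the case analysis for $d(v_j,v_{j+1})$.
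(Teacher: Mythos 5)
Your proposal is correct and follows exactly the route the paper takes: the paper's entire proof of this lemma is the single line ``Similar to the proof of Lemma~7,'' i.e.\ a transcription of the coupling argument from the planar weighted case with $r(v)=\epsilon B$. Your write-up is in fact more complete than the paper's, since you explicitly isolate and justify the one step that genuinely changes in the obstacle setting --- that the hop $v_jv_{j+1}$ lies on an original (free-space) edge of the curve so its geodesic length is its Euclidean length $\epsilon\,d(v_j)$, which is dominated by the geodesic leash length because the leash must reach an edge at Euclidean distance at least $d(v_j)$.
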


\begin{proof}
Similar to the proof of $Lemma~\ref{disFD}$.
\hfill $\Box$
\end{proof}

Let $\delta'_{dF}(P',Q')$ be the discrete Fr\'echet distance computed by a shortest path approximation algorithm,
which gives an
$\epsilon$-approximation of the shortest path. We have
$$|\delta'_{dF}(P',Q') - \delta_{F}(P,Q)| \leq 3\epsilon \delta_{F}(P,Q).$$
$\delta'_{dF}(P',Q')$ can be computed in $O(C(R)^2 pq (1/\epsilon^2) \log^4 (1/\epsilon) T(n,\epsilon)$ time,
where $C(R)$ is a constant depending on the geometry of the problem and $T(n,\epsilon)$ is the
time to approximate the shortest path between two points in $R$. For example, we can use the approximation algorithm given by Clarkson~\cite{Clar87}, which takes $O(n^2\lambda(n)\log(n/\epsilon)/{\epsilon}^4+n^2\log (n\gamma) \log(n\log\gamma))$ time, where $\gamma$ is the ratio of the length of the longest obstacle edge to the
Euclidean distance between the two points, and $\lambda(n)$ is a very slowly-growing function related to
the inverse of the Ackermann's function. Thus, our algorithm takes $O(C(R)^2 pq (1/\epsilon^2) \log^4 (1/\epsilon)(n^2\lambda(n)\log(n/\epsilon)/{\epsilon}^4+n^2\log (n\gamma) \log(n\log\gamma)))$ time.

\begin{theorem}
A $3\epsilon$-approximation of the geodesic Fr\'echet distance in $\mathcal{R}^3$ with Obstacles can be computed in $O(C(R)^2 pq (1/\epsilon^2) \log^4 (1/\epsilon)(n^2\lambda(n)$ $\log(n/\epsilon)/{\epsilon}^4+n^2\log (n\gamma) \log(n\log\gamma)))$ time, where $\gamma$ is the ratio of the length of the longest obstacle edge to the
Euclidean distance between the two points, and $\lambda(n)$ is a very slowly-growing function related to
the inverse of the Ackermann's function.
\end{theorem}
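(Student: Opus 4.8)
The plan is to reduce the continuous geodesic Fr\'echet distance to a discrete one on densely sampled curves, then replace the exact geodesic (obstacle-avoiding) distances, which we cannot compute in closed form, by an approximate shortest-path subroutine, and finally compose the two multiplicative errors and count the total work.

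First I would sample $P$ and $Q$ exactly as in Section~4.1, using vertex radius $r(v)=\epsilon B$ together with the geometric progression of Steiner vertices, obtaining refined chains $P'$ and $Q'$ with $p'=O(C(R)\frac{p}{\epsilon}\log^2\frac{1}{\epsilon})$ and $q'=O(C(R)\frac{q}{\epsilon}\log^2\frac{1}{\epsilon})$ vertices. The just-stated Lemma then yields $(1-\epsilon)\delta_F(P,Q)\le \delta_{dF}(P',Q')\le (1+\epsilon)\delta_F(P,Q)$, so the \emph{discrete} geodesic Fr\'echet distance on the refined curves already approximates the target quantity. This is the step where the obstacle setting matters: its justification mirrors Lemma~\ref{disFD}, the only change being that consecutive sampled points along an edge of $P$ or $Q$ lie in free space, so the geodesic distance between them equals their Euclidean separation and the triangle inequality for the geodesic metric bounds the coupling error by either $\epsilon\, S(\cdot,\cdot)\le \epsilon\,\delta_F(P,Q)$ outside vertex vicinities or $\epsilon B\le\epsilon\,\delta_F(P,Q)$ inside them.

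Next I would observe that $\delta_{dF}=\inf_L\max_i S(u_{a_i},v_{b_i})$ is a minimum over couplings of a maximum of pairwise geodesic distances, so replacing each exact distance $S$ by a value $\tilde S$ with $(1-\epsilon)S\le\tilde S\le(1+\epsilon)S$ perturbs the whole quantity by at most the same multiplicative factor. Computing $\tilde S$ with Clarkson's~\cite{Clar87} $\epsilon$-approximate obstacle-avoiding shortest-path algorithm therefore gives a value $\delta'_{dF}(P',Q')$ satisfying $(1-\epsilon)\delta_{dF}(P',Q')\le\delta'_{dF}(P',Q')\le(1+\epsilon)\delta_{dF}(P',Q')$. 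Composing with the first bound and using $(1+\epsilon)^2\le 1+3\epsilon$ and $(1-\epsilon)^2\ge 1-2\epsilon\ge 1-3\epsilon$ for $\epsilon\le 1/3$ gives the claimed $|\delta'_{dF}(P',Q')-\delta_F(P,Q)|\le 3\epsilon\,\delta_F(P,Q)$.

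Finally I would bound the running time. The discrete Fr\'echet distance on $P'$ and $Q'$ is evaluated by the standard $O(p'q')$ dynamic program of Eiter and Mannila~\cite{Eiter94}, whose cost is dominated by the $O(p'q')$ pairwise distance queries; each query runs Clarkson's algorithm in $T(n,\epsilon)=O(n^2\lambda(n)\log(n/\epsilon)/\epsilon^4+n^2\log(n\gamma)\log(n\log\gamma))$ time, and since $p'q'=O(C(R)^2\frac{pq}{\epsilon^2}\log^4\frac{1}{\epsilon})$, multiplying yields the stated bound. The main obstacle is not the arithmetic but the transfer of the discretization lemma to $\mathcal{R}^3$: one must verify that the free-space assumption on the curves makes the geodesic distance between consecutive samples equal to their Euclidean length, which is precisely what licenses the triangle-inequality step, and that feeding a uniform multiplicative shortest-path approximation into the coupling preserves the $\min$-$\max$ structure of the discrete distance.
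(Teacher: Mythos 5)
Your proposal is correct and follows essentially the same route as the paper: refine $P$ and $Q$ with vertex radius $r(v)=\epsilon B$, invoke the $\mathcal{R}^3$ analogue of Lemma~\ref{disFD} to reduce to the discrete geodesic Fr\'echet distance, substitute Clarkson's $\epsilon$-approximate shortest-path oracle, compose the two multiplicative errors via $(1+\epsilon)^2\le 1+3\epsilon$, and charge $O(p'q')$ oracle calls at $T(n,\epsilon)$ each. Your added remark about why the discretization lemma transfers (consecutive samples in free space, so their geodesic separation is Euclidean) is a detail the paper leaves implicit but does not change the argument.
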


\section{Conclusion}

In this paper, we discussed three versions of the Fr\'echet distance problem in weighted regions and presented an approximation
algorithm for
each version. First, we discussed the non-geodesic Fr\'echet distance problem in planar weighted regions. We showed
that we can
approximate the Fr\'echet distance by using a parameter space, $D$, where each leash is associated with a point in $D$,
and constructing a discrete graph $G$ from $D$.
We then discussed two geodesic Fr\'echet distance problems, in planar weighted regions and in 1 or $\infty$ weighted
regions in
$\mathcal{R}^3$, and showed that in both cases, by adding additional vertices to the polygonal curves, the discrete
Fr\'echet distance can be used to approximate the continuous Fr\'echet distance.


\end{document}